\newcommand{\bm}[1]{\mbox{\boldmath{$#1$}}}
\theoremstyle{plain}
\newtheorem{prop}{Proposition}
\newtheorem{lem}{Lemma}
\begin{document}	
	
\title{\fontsize{23pt}{30pt}\selectfont Movable Antennas Enabled Wireless-Powered NOMA: Continuous and Discrete Positioning Designs}
\author{Ying~Gao, Qingqing~Wu,~\IEEEmembership{Senior Member,~IEEE}, and Wen~Chen,~\IEEEmembership{Senior Member,~IEEE}
\thanks{The authors are with the Department of Electronic Engineering, Shanghai Jiao Tong University, Shanghai 201210, China (e-mail: yinggao@sjtu.edu.cn; qingqingwu@sjtu.edu.cn; wenchen@sjtu.edu.cn).}}

\maketitle 

\begin{abstract}
	This paper investigates a movable antenna (MA)-enabled wireless-powered communication network (WPCN), where multiple wireless devices (WDs) first harvest energy from the downlink signal broadcast by a hybrid access point (HAP) and then transmit information in the uplink using non-orthogonal multiple access. Unlike conventional WPCNs with fixed-position antennas (FPAs), this MA-enabled WPCN allows the MAs at the HAP and the WDs to adjust their positions twice: once before downlink wireless power transfer and once before uplink wireless information transmission. Our goal is to maximize the system sum throughput by jointly optimizing the MA positions, the time allocation, and the uplink power allocation. Considering the characteristics of antenna movement, we explore both continuous and discrete positioning designs, which, after formulation, are found to be non-convex optimization problems. Before tackling these problems, we rigorously prove that using identical MA positions for both downlink and uplink is the optimal strategy in both scenarios, thereby greatly simplifying the problems and enabling easier practical implementation of the system. We then propose alternating optimization-based algorithms to obtain suboptimal solutions for the resulting simplified problems. Simulation results show that: 1) the proposed continuous MA scheme can enhance the sum throughput by up to 395.71\% compared to the benchmark with FPAs, even when additional compensation transmission time is provided to the latter; 2) a step size of one-quarter wavelength for the MA motion driver is generally sufficient for the proposed discrete MA scheme to achieve over 80\% of the sum throughput performance of the continuous MA scheme; 3) when each moving region is large enough to include multiple optimal positions for the continuous MA scheme, the discrete MA scheme can achieve comparable sum throughput without requiring an excessively small step size. 
\end{abstract}

\begin{IEEEkeywords}
	Movable antenna, continuous and discrete positioning designs, resource allocation, wireless-powered communication network, non-orthogonal multiple access. 
\end{IEEEkeywords}

\vspace{-1mm}
\section{Introduction}\label{Sec:intro}
Harvesting energy from the environment provides a cost-effective and virtually unlimited power source for wireless devices (WDs), presenting a greener and more practical alternative to conventional battery-powered methods \cite{2011_Sujesha_EH_survey}. Among various renewable energy options, including solar and wind, radio-frequency signals stand out for energy harvesting (EH) due to their widespread availability and ease of control. A typical application of radio-frequency-based EH can be found in wireless-powered communication networks (WPCNs). Two distinct lines of research have emerged based on whether the energy node and the information access point (AP) are geographically co-located or separated. In the co-located scenario, the well-known harvest-then-transmit protocol was proposed in \cite{2014_Hyungsik_WPCN} for a multiuser single-input single-output (SISO) WPCN using a half-duplex hybrid AP (HAP). Building on this, reference \cite{2014_Hyungsik_full-duplex} introduced a full-duplex HAP, allowing simultaneous energy transfer and data reception through self-interference cancellation. Furthermore, the authors of \cite{2014_Liang_WPCN} extended the single-antenna HAP in \cite{2014_Hyungsik_WPCN} to a multi-antenna configuration, enhancing energy transmission efficiency through beamforming techniques. Beyond these single-cell WPCN studies, reference \cite{2018_Hanjin_WPCN} explored wireless-powered communications in a multiuser SISO interference channel. On the other hand, in the separated scenario, the authors of \cite{2016_Feng_WPCN} considered a three-node WPCN under the harvest-then-transmit protocol. In contrast, reference \cite{2016_Xun_WPCN} explored the scenario where wireless power transfer (WPT) and wireless information transmission (WIT) occurred over orthogonal sub-channels. The model in \cite{2014_Qian_WPCN} extended \cite{2016_Xun_WPCN} by incorporating multiple WDs and adopting multiple antennas at the energy node. 

Despite these theoretical developments, practical WPCNs still suffer from significant limitations, mainly due to the low efficiency of WPT and WIT over long distances. To address this, massive multiple-input multiple-output (MIMO) technology has been proposed as a breakthrough, utilizing a large number of extra antennas to focus energy into ever-smaller regions of space \cite{2015_Gang_massive}. However, implementing massive MIMO systems requires numerous parallel radio-frequency chains, resulting in a substantial increase in hardware costs and energy consumption. Antenna selection offers an effective way to reduce the number of radio-frequency chains, allowing massive MIMO systems to capture much of the channel capacity by selecting a small subset of antennas with favorable channels from a larger pool \cite{2004_Molisch_AS,2004_Sanayei_AS}. Nonetheless, as the number of candidate antennas increases, the computational costs of channel estimation and antenna selection algorithms also grow. Intelligent reflecting surfaces (IRSs) have also been proposed as a cost-effective solution to enhance WPCN performance, leveraging their ability to reconfigure wireless channels by reflecting incident signals in desired directions \cite{2019_Qingqing_Joint,2022_Qingqing_WEIT_overview,2024_Ying_WPCN_IRS,2025_Qingqing_Deployment}. However, integrating IRSs as third-party devices into communication systems adds complexity and can potentially impact overall reliability. Furthermore, whether using massive MIMO, antenna selection, or IRS technologies, the fixed positions of transmit/receive antennas limit the ability to fully exploit channel variations in the continuous spatial field. 

Recently, movable antennas (MAs) \cite{2023_Lipeng_Modeling,2023_Wenyan_MIMO,2023_Lipeng_overview}, also termed fluid antennas \cite{2021_Wong_fluid,2020_Wong_Fluid}, have garnered considerable academic attention as a promising solution to overcome the inherent limitations of fixed-position antenna (FPA)-based systems. While MAs have long existed in antenna technology, systematic research into their wireless communication applications has only recently emerged. MAs are connected to radio-frequency chains via flexible cables, with positions dynamically adjusted using controllers like stepper motors or servos. Unlike conventional FPA systems, MA-enabled systems can reposition transmit/receive antennas, reconfiguring channel conditions to fully exploit spatial diversity, mitigate interference, and improve spatial multiplexing gains \cite{2023_Lipeng_uplink}. These advantages have driven several research efforts on MA-assisted communication systems. For instance, reference \cite{2023_Lipeng_Modeling} introduced a mechanical MA architecture and a field-response based channel model for single-MA systems, examining conditions for alignment with various channel types. The study also analyzed the signal-to-noise (SNR) ratio gain of a single receive MA compared to its FPA counterpart, showing that performance improvements depend heavily on the number of channel paths and the MA's spatial movement area. In \cite{2023_Wenyan_MIMO}, MA-enabled MIMO systems were investigated, where the positions of transmit and receive MAs were jointly optimized along with the transmit covariance matrix to maximize the channel capacity. Additionally, other studies have explored joint MA positioning and resource allocation strategies to improve data rates \cite{2024_Nian_MA,2024_Biqian_MA,2024_Yichi_hybrid}, enhance user fairness \cite{2023_Zhenyu_uplink,2024_Ying_MA}, suppress interference \cite{2023_Lipeng_null}, conserve power \cite{2023_Lipeng_uplink,2024_Haoran_MA,2024_Honghao_IFC,2023_Guojie_gradient}, strengthen physical security \cite{2024_Zhenqiao_secure,2024_Guojie_secure_lett,2024_Guojie_secure_long}, facilitate spectrum sharing \cite{2024_Weidong_MA_spectrumsharing}, and achieve wide-beam coverage \cite{2025_Weidong_MA_widebeam}.
All of the above works assumed continuous MA positioning within a given area. While this offers maximum flexibility and characterizes the performance limit, implementation may be challenging due to the discrete movement constraints of practical stepper motors. Given this, some studies have modeled MA motion as discrete steps and investigated discrete positioning designs for objectives such as transmit power minimization \cite{2023_Yifei_discrete} and received signal power maximization \cite{2024_Weidong_graph}. 

Although the performance of MAs in wireless communications has been explored across various system setups, research on MA-aided (or fluid antenna-aided) WPCNs is still in its early stages. Among these studies, reference \cite{2024_Xiazhi_FAS_WPCN} examined a scenario where a transmitter equipped with a fluid antenna is powered by an energy node and then uses the harvested energy to send data to a single receiver. The study assumed that one of several switchable fluid antenna ports is selected for both WPT and WIT, and derived analytical and asymptotic expressions of the outage probability to evaluate the system performance. Reference \cite{2024_Ghadi_FAS_WPCN} then extended this setup to two receivers, where the fluid antenna-equipped devices are the receivers rather than the transmitter. However, these studies are limited to single- or two-receiver configurations, which are not ideal for broader applications, and do not fully leverage MA/fluid antenna technology across all involved devices for further performance improvements. Moreover, since MA positions can be dynamically adjusted, a fundamental question remains unanswered in MA-aided WPCNs: \emph{is using different MA positions for downlink WPT and uplink WIT beneficial for maximizing system sum throughput when the energy node and the information AP are geographically co-located}? This question stems from the fact that downlink WPT and uplink WIT occur in different time slots and have distinct objectives, but involve the same devices. Thus, it is unknown whether employing different MA positions for these two phases is the optimal strategy in this context.   

\begin{figure}[!t]
	\centering
	\includegraphics[width=0.49\textwidth]{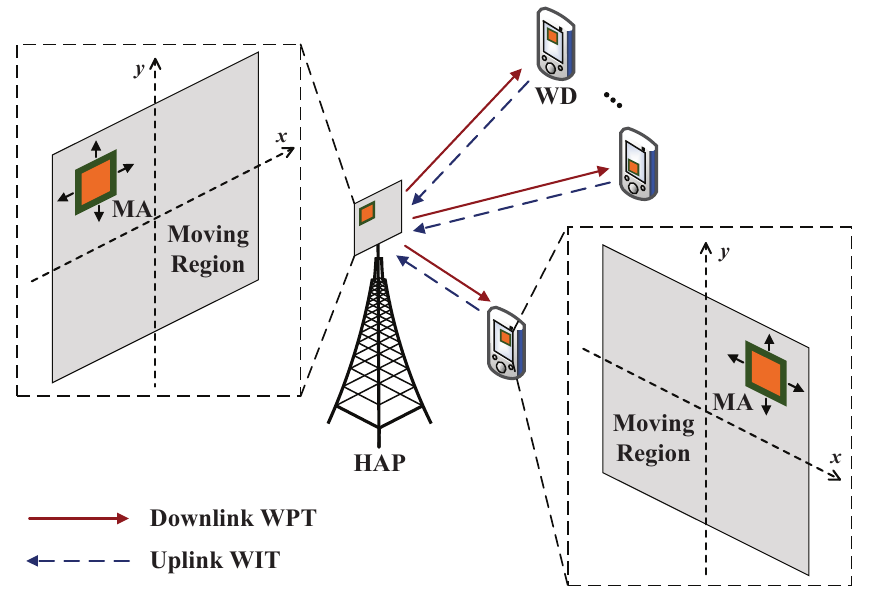}
	\caption{Illustration of an MA-enabled WPCN.} \label{Fig:system_model}
	\vspace{-1mm}
\end{figure}

In addition, while some studies, such as \cite{2023_Yifei_discrete} and \cite{2024_Weidong_graph}, have investigated discrete MA positioning designs for various objectives, they do not evaluate the performance gap between continuous and discrete positioning or clarify the numerical thresholds or conditions under which discrete positioning can effectively capture most of the performance benefits of continuous positioning. Naturally, for discrete MA positioning, a smaller step size of the MA motion driver leads to an increased number of candidate positions. As this number approaches infinity, discrete positioning can be considered equivalent to continuous positioning. However, a tiny step size imposes high demands on hardware, resulting in significantly increased costs. Furthermore, the complexity of position selection rises with the number of candidate options. Therefore, determining the appropriate step size that ensures a satisfactory level of performance is of considerable engineering significance. In light of this concern, from a theoretical perspective, if the MA moving region is sufficiently large, discrete positioning can achieve performance comparable to continuous positioning, even with a step size that is not extremely small. This is because the maximum performance of continuous positioning can be achieved within a finite MA moving region, and the superimposed power of multiple channel paths exhibits a periodic nature in the receive region due to the existence of the cosine function \cite{2023_Lipeng_Modeling}. In other words, multiple optimal positions exist for continuous positioning design to reach maximum performance when a sufficiently large MA moving region is available. As a consequence, there is a high likelihood that the candidate options in discrete positioning design include some of these optimal positions, even if the total number of candidates is not considerably large. 

Motivated by the above discussions, we investigate an MA-enabled WPCN, comprising multiple WDs and a HAP, each equipped with an MA, as shown in Fig. \ref{Fig:system_model}. This system follows the typical harvest-then-transmit protocol \cite{2014_Hyungsik_WPCN} and utilizes non-orthogonal multiple access (NOMA) for uplink WIT. In particular, the MA at each device can adjust its position prior to both downlink WPT and uplink WIT, offering flexibility that is absent in conventional FPA-based WPCNs. We aim to maximize the system sum throughput through the joint optimization of MA positions, time allocation, and uplink power allocation. Given the characteristics of antenna movement, we study both continuous and discrete positioning designs, formulated as non-convex and mixed-integer non-convex optimization problems, respectively. Our main contributions are summarized as follows. 
\begin{itemize}
	\item  For the optimization problem involving continuous positioning, we first reveal that the optimum is achieved when the MA positions are identical for both downlink WPT and uplink WIT. This result not only simplifies the problem but also makes the system easier to implement in practice.  Building on this, we propose an iterative algorithm based on alternating optimization (AO) to solve the resulting simplified problem. While each position variable is not explicitly exposed in the objective function of its corresponding subproblem, which involves the fourth power of the absolute value, we efficiently solve these subproblems using the successive convex approximation (SCA) technique.  
	\item For the optimization problem involving discrete positioning, we similarly prove that using identical MA positions for both downlink and uplink is optimal. We then apply the AO method to solve the resulting simplified problem by dividing the optimization variables into three blocks. Each MA position-related binary optimization variable can be optimally determined in its corresponding subproblem using an exhaustive search. In particular, by leveraging the special structure of these subproblems, we streamline the exhaustive search process, significantly reducing the complexity of determining the optimal solutions for these binary variables to a more acceptable level. 
	\item Numerical results demonstrate that the proposed continuous MA scheme achieves notable improvements, enhancing the sum throughput by up to 395.71\% compared to the FPA benchmark, even when the latter is given additional transmission time for compensation. Moreover, for the discrete MA scheme, a step size of one-quarter wavelength for the MA motion driver is generally sufficient to attain over 80\% of the performance achieved by the continuous MA scheme in terms of sum throughput. Additionally, if each moving region is sufficiently large to encompass multiple optimal positions for the continuous MA scheme, the discrete MA scheme can deliver comparable sum throughput without demanding a tiny step size. This finding supports our theoretical analysis presented in the previous paragraph. 
\end{itemize}

The rest of this paper is structured as follows. In Section \ref{Sec:model_and_formulation}, we describe the system model and introduce two problem formulations for an MA-enabled WPCN, focusing on antenna position optimization in both continuous regions and discrete sets. Sections \ref{Sec:P1_solu} and \ref{Sec:P2_solu} detail the proposed algorithms for these two problems. Section \ref{Sec:simu} presents numerical simulations to assess the effectiveness of our proposed algorithms. Lastly, the conclusions are drawn in Section \ref{Sec:conclu}. 

\emph{Notations:} Let $\mathbb{C}$ denote the complex space, and $\mathbb{C}^{M \times N}$ represent the space of $M \times N$ matrices with complex-valued entries. For any complex number $x$, its modulus and phase are represented by $\left|x\right|$ and $\arg(x)$, respectively. For a vector $\mathbf{x}$, $\left\|\mathbf{x}\right\|_1$ denotes its $l_1$ norm, while $\left[\mathbf{x}\right]_i$ refers to its $i$-th element. For a matrix $\mathbf{X}$ of any size, $\left\|\mathbf{X}\right\|_2$ is the spectral norm, $\left\|\mathbf{X}\right\|_F$ is the Frobenius norm, and $\left[\mathbf{X}\right]_{i,j}$ represents the element located at the $i$-th row and $j$-th column. When comparing two square matrices $\mathbf{A}_1$ and $\mathbf{A}_2$, the notation $\mathbf{A}_1 \succeq \mathbf{A}_2$ implies that $\mathbf{A}_1 - \mathbf{A}_2$ is positive semidefinite. The identity matrix is denoted by $\mathbf{I}$, with its dimensions inferred from the context. The Hermitian (conjugate) transpose is denoted by $(\cdot)^H$, and $\mathbb{E}(\cdot)$ stands for the expectation operator. The operation ${\rm diag}(\cdot)$ converts a vector into a diagonal matrix. The notation $\mathcal{CN}(0, \sigma^2)$ represents a complex Gaussian distribution with zero mean and variance $\sigma^2$. Finally, $\jmath \triangleq \sqrt{-1}$ is used to denote the imaginary unit.

\begin{figure}[!t]
	\centering
	\includegraphics[width=0.48\textwidth]{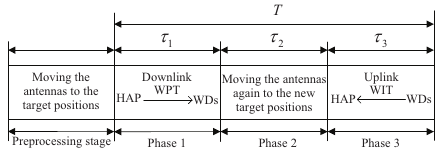}
	\caption{Illustration of the transmission protocol.} \label{Fig:trans_protocol}
\end{figure}

\section{System Model and Problem Formulation}\label{Sec:model_and_formulation}
As illustrated in Fig. \ref{Fig:system_model}, we consider an MA-enabled WPCN consisting of a single-MA HAP and $K$ single-MA WDs, indexed by $\mathcal K \triangleq \{1,\cdots,K\}$. 
The MAs are connected to radio-frequency chains via flexible cables. This configuration allows the position of the MA at the HAP to be adjusted within a given two-dimensional (2D) region $\mathcal C_\omega$ and the position of the MA at WD $k$ to be adjusted within a given 2D region $\mathcal C_k$.\footnote{We focus on a 2D movement region in this work, primarily due to the increased mechanical complexity and energy consumption that would be incurred in the three-dimensional (3D) case. Nevertheless, the proposed algorithmic framework is inherently extensible to the 3D case through appropriate generalization of the MA position parameterization and corresponding modifications to the channel model.} The reference points in the regions $\mathcal C_\omega$ and $\mathcal C_k$ are denoted by $\boldsymbol \omega_0 = [0,0]^T$ and $\boldsymbol u_{k,0} = [0,0]^T$, respectively. These reference points also serve as the initial positions of the HAP- and WD-side MAs. Both $\mathcal C_\omega$ and $\mathcal C_k$ are assumed to be square regions of size $A\times A$. 
We assume that the far-field condition is satisfied between the HAP and the WDs. Thus, altering the MA positions only affects the phase of the complex coefficient for each channel path component. The angle of departure (AoD), the angle of arrival (AoA), and the amplitude of the complex coefficient remain unchanged \cite{2023_Lipeng_Modeling}. Let $L_k$ and $\tilde L_k$ represent the total number of transmit and receive channel paths from the HAP to WD $k$, respectively. For the $i$-th transmit path to WD $k$, the elevation and azimuth AoDs are $\theta_k^i \in \left[0, \pi\right]$ and $\phi_k^i\in \left[0,\pi\right]$, respectively. For the $j$-th receive path to WD $k$, the elevation and azimuth AoAs are $\tilde \theta_k^j \in \left[0,\pi\right]$ and $\tilde \phi_k^j \in \left[0,\pi\right]$, respectively. Furthermore, the uplink-downlink channel reciprocity is assumed. To quantify the performance limit, we assume that perfect channel state information (CSI) is available at the HAP.\footnote{MA channel acquisition falls into two main categories: model-based and model-free methods. Model-based approaches rely on the field-response channel model and often adopt techniques such as compressed sensing \cite{2023_Wenyan_estimation} or tensor decomposition \cite{2024_Ruoyu_channel}. In contrast, model-free methods perform channel measurements at selected positions and reconstruct unmeasured channels either by assuming spatial correlation with nearby samples \cite{2023_Christodoulos_channel}, by applying Bayesian regression \cite{2024_Zijian_channel}, or by leveraging machine learning techniques \cite{2024_Shuyan_channel}.}


The transmission protocol for the MA-enabled WPCN is illustrated in Fig. \ref{Fig:trans_protocol}. Like conventional FPA-based WPCNs, the MA-enabled WPCN follows the typical harvest-then-transmit protocol \cite{2014_Hyungsik_WPCN}. However, it differentiates itself by permitting the antennas to change their positions twice: once before downlink WPT and once before uplink WIT. In alignment with existing works \cite{2024_Xiazhi_FAS_WPCN,2023_Xiao_FAS,2024_Pengcheng_MA}, the first movement of the MAs is treated as a preprocessing stage, and the time spent on this movement is not included in the total transmission time $T$. The entire transmission process is divided into three phases: one for downlink WPT, one for the second movement of the MAs, and one for uplink WIT, as shown in Fig. \ref{Fig:trans_protocol}. We define $\tau_1$, $\tau_2$, and $\tau_3$ as the time durations of these three phases. In addition, this study targets practical scenarios characterized by slowly varying wireless environments, such as indoor Machine-Type Communication (MTC) systems, where devices are typically stationary and propagation conditions remain stable over extended periods. In such scenarios, the channel at each location within the designated moving regions can be regarded as effectively constant throughout the preprocessing phase and the subsequent transmission period.

The following study is based on two movement patterns of the MAs: continuous and discrete. The former is employed to quantify the performance limit, while the latter is more consistent with practical implementation. By considering both schemes, we evaluate their performance gap and extract valuable engineering insights. 

\subsection{Continuous Antenna Positioning}
In this scenario, the MAs can move freely and continuously within given regions. It is assumed that before the start of downlink WPT, the MA at the HAP and the MA at WD $k$ have been moved to the positions $\bm \omega_1 = \left[x_{\omega,1}, y_{\omega,1}\right]^T\in\mathcal C_\omega$ and $\boldsymbol u_{k,1} = \left[x_{k,1}, y_{k,1}\right]^T\in\mathcal C_k$, respectively.\footnote{This work focuses on evaluating the communication performance of MA-aided WPCNs under the assumption of successful antenna positioning. As described in \cite{2023_Lipeng_overview}, a typical MA-mounted transmitter/receiver comprises two distinct modules: a communication module and an antenna positioning module. Since these modules serve distinct functions, they do not necessarily share the same power supply. While antenna movement incurs mechanical energy consumption, this is not modeled here, as the movement can be powered separately and occurs infrequently in applications such as indoor MTC. A joint analysis of mechanical and communication energy consumption is a promising direction for future work.} 
Then, during phase 1, the HAP broadcasts an energy signal to all the WDs with a constant transmit power $P_{\rm A}$ for a duration of $\tau_1$. 
By adopting the linear EH model and ignoring the negligible noise power, the energy harvested by WD $k$ in the downlink can be expressed as $E_k = \zeta P_{\rm A}\left|h_{k,1}(\boldsymbol \omega_1, \boldsymbol u_{k,1})\right|^2\tau_1$, where $\zeta$ stands for the constant energy conversion efficiency for each WD, and $h_{k,1}(\boldsymbol \omega_1, \boldsymbol u_{k,1})$ denotes the downlink channel from the HAP to WD $k$, which is determined by the propagation environment and the positions $\boldsymbol \omega_1$ and $\boldsymbol u_{k,1}$. Specifically,
\begin{align} 
	h_{k,1}(\boldsymbol \omega_1, \boldsymbol u_{k,1}) = \bm f_k(\boldsymbol u_{k,1})^H\mathbf \Sigma_k\bm g_k(\boldsymbol \omega_1), 
\end{align}
where $\mathbf \Sigma_k \in \mathbb C^{\tilde L_k\times L_k}$ represents the path-response matrix characterizing the responses between all the transmit and receive channel paths from $\boldsymbol \omega_0$ to $\boldsymbol u_{k,0}$, and $\bm g_k(\boldsymbol \omega_1)$ and $\bm f_k(\boldsymbol u_{k,1})$ denote the transmit and receive field-response vectors for the channel from the HAP to WD $k$, respectively, given by \cite{2023_Wenyan_MIMO}
\begin{subequations}\label{eqv:FRV}
\begin{align}
	\bm g_k(\boldsymbol \omega_1) & = \left[e^{\jmath\frac{2\pi}{\lambda}\boldsymbol \omega_1^T\boldsymbol a_k^1}, e^{\jmath\frac{2\pi}{\lambda}\boldsymbol \omega_1^T\boldsymbol a_k^2}, \cdots, e^{\jmath\frac{2\pi}{\lambda}\boldsymbol \omega_1^T\boldsymbol a_k^{L_k}}\right]^T,\\
	\bm f_k(\boldsymbol u_{k,1}) & = \left[e^{\jmath\frac{2\pi}{\lambda}\boldsymbol u_{k,1}^T\tilde{\boldsymbol a}_k^1}, e^{\jmath\frac{2\pi}{\lambda}\boldsymbol u_{k,1}^T\tilde{\boldsymbol a}_k^2}, \cdots, e^{\jmath\frac{2\pi}{\lambda}\boldsymbol u_{k,1}^T\tilde{\boldsymbol a}_k^{\tilde L_k}}\right]^T, 
\end{align}
\end{subequations}
with $\bm a_k^i\triangleq \left[\sin\theta_k^i\cos\phi_k^i, \cos\theta_k^i\right]^T$, $i\in\{1,\cdots,L_k\}$ and $\tilde{\bm a}_k^j \triangleq \left[\sin\tilde{\theta}_k^j\cos\tilde{\phi}_k^j, \cos\tilde{\theta}_k^j\right]^T$, $j\in\{1,\cdots,\tilde L_k\}$. 

In the subsequent phase 2, the MAs are moved using step motors along slide tracks (for details on the hardware architecture, please refer to \cite[Fig. 2]{2023_Lipeng_overview}). Without loss of generality, we assume that all the MAs start moving simultaneously at the same speed, denoted by $v$, measured in meter/second (m/s). Specifically, the MA at the HAP moves from the position $\boldsymbol \omega_1 = \left[x_{w,1}, y_{w,1}\right]^T\in\mathcal C_\omega$ to the position $\boldsymbol \omega_2 = \left[x_{w,2}, y_{w,2}\right]^T\in\mathcal C_\omega$, while the MA at WD $k$ moves from the position $\boldsymbol u_{k,1} = \left[x_{k,1}, y_{k,1}\right]^T\in\mathcal C_k$ to the position $\boldsymbol u_{k,2} = \left[x_{k,2}, y_{k,2}\right]^T\in\mathcal C_k$. Considering that the movement of the MAs involves both the x-axis and y-axis directions, the time required for all the MAs to complete their movement is:
\begin{align}\label{eq:tau_2}
	\tau_2 = \max\left(\frac{\left\|\boldsymbol \omega_2 - \boldsymbol \omega_1\right\|_1}{v}, \max_{k\in\mathcal K}\frac{\left\| \boldsymbol u_{k,2} - \boldsymbol u_{k,1}\right\|_1}{v}\right).
\end{align}

In the final uplink WIT phase, all the WDs utilize their harvested energy to transmit their respective information signals to the HAP simultaneously using NOMA. The successive interference cancellation technique \cite{2014_Zhiguo_NOMA} is employed at the HAP to eliminate multiuser interference. Let $p_k$ and $\pi_k$ denote the transmit power and decoding order of WD $k$. Then, the achievable throughput of WD $k$ in bits/Hz can be expressed as \cite{2014_Mohammed_NOMA} 
\begin{align}
	 R_k^{\rm cont} \!=\! \tau_3\log_2\left(1 \!+\! \frac{p_k\left|h_{k,2}(\boldsymbol \omega_2, \boldsymbol u_{k,2})\right|^2}{\sum_{\pi_{k'} > \pi_k}p_{k'}\left|h_{k',2}(\boldsymbol \omega_2, \boldsymbol u_{k',2})\right|^2 + \sigma^2}\right), 
\end{align}
where $\sigma^2$ is the additive white Gaussian noise power at the HAP. Moreover, $h_{k,2}(\boldsymbol \omega_2, \boldsymbol u_{k,2})$ represents the uplink channel from WD $k$ to the HAP, given by 
\begin{align}
	h_{k,2}(\boldsymbol \omega_2, \boldsymbol u_{k,2}) & = \bm g_k(\boldsymbol \omega_2)^H\mathbf \Sigma_k^H\bm f_k(\boldsymbol u_{k,2}) \nonumber\\ 
	& = \left(\bm f_k(\boldsymbol u_{k,2})^H\mathbf \Sigma_k\bm g_k(\boldsymbol \omega_2)\right)^H, 
\end{align}
with 
\begin{subequations}
      \begin{align}
      	\bm f_k(\boldsymbol u_{k,2}) & = \left[e^{\jmath\frac{2\pi}{\lambda}\boldsymbol u_{k,2}^T\tilde{\boldsymbol a}_k^1}, e^{\jmath\frac{2\pi}{\lambda}\boldsymbol u_{k,2}^T\tilde{\boldsymbol a}_k^2}, \cdots, e^{\jmath\frac{2\pi}{\lambda}\boldsymbol u_{k,2}^T\tilde{\boldsymbol a}_k^{\tilde L_k}}\right]^T, \\
      	\bm g_k(\boldsymbol \omega_2) & = \left[e^{\jmath\frac{2\pi}{\lambda}\boldsymbol \omega_2^T\boldsymbol a_k^1}, e^{\jmath\frac{2\pi}{\lambda}\boldsymbol \omega_2^T\boldsymbol a_k^2}, \cdots, e^{\jmath\frac{2\pi}{\lambda}\boldsymbol \omega_2^T\boldsymbol a_k^{L_k}}\right]^T
      \end{align}
\end{subequations}
being the transmit and receive field-response vectors, respectively. As a result, the system sum throughput is given by 
\begin{align}
	R_{\rm sum}^{\rm cont} & = \sum_{k=1}^K R_k^{\rm cont} \nonumber\\
	& = \tau_3\log_2\left(1 + \sum_{k=1}^K\frac{p_k\left|h_{k,2}(\boldsymbol \omega_2, \boldsymbol u_{k,2})\right|^2}{\sigma^2}\right), 
\end{align}
which is independent of the decoding order. 

\vspace{-2mm}
\subsection{Discrete Antenna Positioning}
In this scenario, MA motion is modeled as discrete steps, with a step size of $d$ assumed for each MA without loss of generality. The number of candidate discrete positions for each MA is then determined by $d$ and the size of its moving region. We assume there are $N_\omega$ candidate discrete positions for the MA at the HAP, denoted as $\mathbf p_1, \cdots, \mathbf p_{N_\omega}$, and $N_{u,k}$ candidate discrete positions for the MA at WD $k$, denoted as $\mathbf q_{k,1}, \cdots, \mathbf q_{k,N_{u,k}}$. It is further assumed that before the start of phase 1, the MA at the HAP is moved to the position $\sum_{m=1}^{N_\omega}s_1^m\mathbf p_m$, and the MA at WD $k$ is moved to the position $\sum_{n=1}^{N_{u,k}}t_{k,1}^n\mathbf q_{k,n}$. Here, $s_1^m \in \{0,1\}$ with $\sum_{m=1}^{N_\omega}s_1^m = 1$ is a binary variable indicating the position of the HAP's MA in phase 1. Similarly, $t_{k,1}^n \in \{0,1\}$ with $\sum_{n=1}^{N_{u,k}}t_{k,1}^n = 1$ indicates the position of the $k$-th MD's MA in phase 1. Then, the energy harvested by WD $k$ during phase 1 is given by $E_k = \zeta P_{\rm A}\left|h_{k,1}(\{s_1^m\},\{t_{k,1}^n\})\right|^2\tau_1$,  
where 
\begin{align}
& h_{k,1}(\{s_1^m\},\{t_{k,1}^n\}) \nonumber\\
& = \left(\bm f_k\left(\sum_{n=1}^{N_{u,k}}t_{k,1}^n\mathbf q_{k,n}\right)\right)^H\mathbf \Sigma_k\left(\bm g_k\left( \sum_{m=1}^{N_\omega}s_1^m\mathbf p_m\right)\right) \nonumber\\
& = \left(\sum_{n=1}^{N_{u,k}}t_{k,1}^n\bm f_k(\mathbf q_{k,n})\right)^H\mathbf \Sigma_k\left( \sum_{m=1}^{N_\omega}s_1^m\bm g_k(\mathbf p_m)\right)
\end{align}  
denotes the downlink channel from the HAP to WD $k$. 

In phase 2, the MA at the HAP moves from the position $\sum_{m=1}^{N_\omega}s_1^m\mathbf p_m$ to the position $\sum_{m=1}^{N_\omega}s_2^m\mathbf p_m$, while the MA at WD $k$ moves from the position $\sum_{n=1}^{N_{u,k}}t_{k,1}^n\mathbf q_{k,n}$ to the position $\sum_{n=1}^{N_{u,k}}t_{k,2}^n\mathbf q_{k,n}$. Here, $s_2^m$ and $t_{k,2}^n$ are binary variables indicating the new target positions of the corresponding MAs in phase 2. We assume that the step time for each MA is $\Delta_d$ seconds. Then, the duration of this phase can be expressed as 
\begin{align}
	\tau_2 = & \max\left(\frac{\left\|\sum_{m=1}^{N_\omega}s_2^m\mathbf p_m -  \sum_{m=1}^{N_\omega}s_1^m\mathbf p_m\right\|_1}{d}\Delta_d\right., \nonumber\\
	&\left. \max_{k\in\mathcal K}\frac{\left\| \sum_{n=1}^{N_{u,k}}t_{k,2}^n\mathbf q_{k,n} - \sum_{n=1}^{N_{u,k}}t_{k,1}^n\mathbf q_{k,n}\right\|_1}{d}\Delta_d\right).
\end{align}

In the final phase, the system sum throughput can be expressed as  
\begin{align}
	R_{\rm sum}^{\rm disc} = \tau_3\log_2\left(1 + \sum_{k=1}^K\frac{p_k\left|h_{k,2}(\{s_2^m\},\{t_{k,2}^n\})\right|^2}{\sigma^2}\right), 
\end{align}
where $h_{k,2}(\{s_2^m\},\{t_{k,2}^n\})$ denotes the uplink channel from WD $k$ to the HAP, given by $h_{k,2}(\{s_2^m\},\{t_{k,2}^n\}) = \left( \left( \sum_{n=1}^{N_{u,k}}t_{k,2}^n\bm f_k(\mathbf q_{k,n})\right)^H\mathbf \Sigma_k\left( \sum_{m=1}^{N_\omega}s_2^m\bm g_k(\mathbf p_m)\right)\right)^H$. 

\subsection{Problem Formulation}
In this paper, our objective is to maximize the system sum throughput by jointly optimizing the MA positions, the time allocation, and the uplink power allocation. 
\subsubsection{\textbf{Continuous Antenna positioning}} The problem of interest can be mathematically formulated as
\begin{subequations}\label{P1}
	\begin{eqnarray}
		\hspace{-6mm}\text{(P1)}: \hspace{-1mm}&\underset{\mathcal Z_1}{\max}& \tau_3\log_2\left(1 + \sum_{k=1}^K\frac{p_k\left|h_{k,2}(\boldsymbol \omega_2, \boldsymbol u_{k,2})\right|^2}{\sigma^2}\right) \\
		&\text{s.t.}& \hspace{-3mm} \tau_1 + \tau_2 + \tau_3 \leq T, \label{P1_cons:b}\\
		&& \hspace{-3mm} 0 \leq \tau_1 \leq T, \ 0 \leq \tau_2 \leq T, \ 0 \leq \tau_3 \leq T, \label{P1_cons:c}\\
		&& \hspace{-3mm} p_k\tau_3 \leq \zeta P_{\rm A}\left|h_{k,1}(\boldsymbol \omega_1, \boldsymbol u_{k,1})\right|^2\tau_1, \ \forall k\in\mathcal K,\label{P1_cons:d}\\
		&& \hspace{-3mm} \boldsymbol \omega_1\in\mathcal C_\omega, \  \boldsymbol \omega_2\in\mathcal C_\omega, \label{P1_cons:e}\\
		&& \hspace{-3mm} \boldsymbol u_{k,1} \in\mathcal C_{u,k}, \ \boldsymbol u_{k,2} \in\mathcal C_{u,k}, \ \forall k\in\mathcal K, \label{P1_cons:f}
	\end{eqnarray}
\end{subequations}
where $\mathcal Z_1 \triangleq \left\lbrace \boldsymbol \omega_1,\boldsymbol \omega_2, \{\boldsymbol u_{k,1}\}, \{\boldsymbol u_{k,2}\},\tau_1, \tau_2, \tau_3, \{p_k \geq 0\} \right\rbrace$ is composed of all the optimization variables and \eqref{P1_cons:d} denotes the energy causality. Note that the optimization variables are intricately coupled in the objective function and constraint \eqref{P1_cons:d}. This renders (P1) to be a non-convex optimization problem that cannot be directly solved using standard optimization techniques.  

\subsubsection{\textbf{Discrete Antenna Positioning}} The corresponding sum throughput maximization problem can be formulated as
\begin{subequations}\label{P2}
	\begin{eqnarray}
	    &\hspace{-1cm}\text{(P2)}: &\hspace{-4mm} \underset{\mathcal Z_2}{\max} \hspace{1.5mm} \tau_3\log_2\left(1 \!+\! \sum_{k=1}^K\frac{p_k\left|h_{k,2}(\{s_2^m\},\{t_{k,2}^n\})\right|^2}{\sigma^2}\right) \\
	   &\hspace{-8mm} \text{s.t.}& \hspace{-3mm} \eqref{P1_cons:b},\eqref{P1_cons:c}, \label{P2_cons:b}\\
	   && \hspace{-3mm} p_k\tau_3 \leq \zeta P_{\rm A}\left|h_{k,1}(\{s_1^m\},\{t_{k,1}^n\})\right|^2\tau_1, \ \forall k\in\mathcal K, \label{P2_cons:c}\\
	   && \hspace{-3mm} s_1^m \in \{0,1\}, \ s_2^m \in \{0,1\}, \ \forall m \in\mathcal N_\omega, \label{P2_cons:d}\\
	   && \hspace{-3mm} \sum_{m=1}^{N_\omega}s_1^m = 1, \ \sum_{m=1}^{N_\omega}s_2^m = 1, \label{P2_cons:e}\\
	   && \hspace{-3mm} t_{k,1}^n \in \{0,1\}, \ t_{k,2}^n \in \{0,1\}, \ \forall k\in\mathcal K, n \in\mathcal N_{u,k}, \label{P2_cons:f}\\
	   && \hspace{-3mm} \sum_{n=1}^{N_{u,k}}t_{k,1}^n = 1, \ \sum_{n=1}^{N_{u,k}}t_{k,2}^n = 1, \ \forall k\in\mathcal K, \label{P2_cons:g}
	\end{eqnarray} 
\end{subequations}
where $\mathcal Z_2 \!\triangleq\!\left\lbrace \{s_1^m\}, \{s_2^m\}, \{t_{k,1}^n\}, \{t_{k,2}^n\}, \tau_1, \tau_2, \tau_3, \{p_k \!\geq\! 0\} \right\rbrace$. Problem (P2), being a mixed-integer non-convex optimization problem, is likely more difficult to solve to optimality than (P1) due to the presence of not only coupled variables but also binary variables, which generally increases the complexity of the solution process. 

\section{Proposed Solution for Continuous Positioning Design}\label{Sec:P1_solu}
In this section, we first investigate whether the optimal solution to (P1) requires distinct MA positions for the downlink WPT and uplink WIT phases. After that, we propose a computationally efficient algorithm to solve the resulting problem suboptimally. 

\vspace{-3mm}
\subsection{Should MA Positions Differ for downlink and uplink?}
For problem (P1), we have the following proposition.
\begin{prop}\label{prop1}
	\rm The optimal solution of (P1), denoted by $\mathcal Z_1^* \triangleq \left\lbrace \boldsymbol \omega_1^*,\boldsymbol \omega_2^*, \{\boldsymbol u_{k,1}^*\}, \{\boldsymbol u_{k,2}^*\}, \tau_1^*, \tau_2^*, \tau_3^*, \{p_k^* \geq 0\}\right\rbrace$, satisfies $p_k^* = \frac{\zeta P_{\rm A}\left|h_{k,1}(\boldsymbol \omega_1^*, \boldsymbol u_{k,1}^*)\right|^2\tau_1^*}{\tau_3^*}$, $\forall k\in\mathcal K$,
	$\boldsymbol \omega_1^* = \boldsymbol \omega_2^*$, $\boldsymbol u_{k,1}^* = \boldsymbol u_{k,2}^*$, $\forall k\in\mathcal K$, $\tau_2^* = 0$, and $\tau_3^* = T - \tau_1^*$.
\end{prop}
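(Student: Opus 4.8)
The plan is to rewrite (P1) in an equivalent form in which the objective depends on the downlink and uplink antenna positions only through one bilinear quantity, and then to break the coupling between the two phases with an elementary inequality. I would proceed in three steps. \emph{Step 1:} First note that the optimal value is strictly positive, so in particular $\tau_1^*>0$ and $\tau_3^*>0$ (a zero value is beaten by any solution that harvests a positive amount of energy and transmits with positive power). Since the objective is strictly increasing in each $p_k$ whenever $\tau_3^*>0$, and \eqref{P1_cons:d} is the only constraint bounding $p_k$ from above, every optimum makes \eqref{P1_cons:d} tight, i.e. $p_k^*=\zeta P_{\rm A}\left|h_{k,1}(\boldsymbol\omega_1^*,\boldsymbol u_{k,1}^*)\right|^2\tau_1^*/\tau_3^*$. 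Substituting this and using the assumed uplink--downlink reciprocity, so that $\left|h_{k,1}(\boldsymbol\omega,\boldsymbol u)\right|^2=\left|h_{k,2}(\boldsymbol\omega,\boldsymbol u)\right|^2\triangleq G_k(\boldsymbol\omega,\boldsymbol u)$, reduces the objective to $\tau_3\log_2\!\left(1+\frac{\zeta P_{\rm A}\tau_1}{\tau_3\sigma^2}\sum_{k}G_k(\boldsymbol\omega_1,\boldsymbol u_{k,1})\,G_k(\boldsymbol\omega_2,\boldsymbol u_{k,2})\right)$, which for fixed $\tau_1,\tau_3$ is increasing in $S\triangleq\sum_k G_k(\boldsymbol\omega_1,\boldsymbol u_{k,1})G_k(\boldsymbol\omega_2,\boldsymbol u_{k,2})$.

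\emph{Step 2:} I would then bound $S$. By Cauchy--Schwarz followed by $\sqrt{X}\sqrt{Y}\le\max(X,Y)$ for $X,Y\ge 0$, $S\le\sqrt{\sum_k G_k(\boldsymbol\omega_1,\boldsymbol u_{k,1})^2}\,\sqrt{\sum_k G_k(\boldsymbol\omega_2,\boldsymbol u_{k,2})^2}\le\Phi^{\max}$, where $\Phi^{\max}\triangleq\max_{\boldsymbol\omega\in\mathcal C_\omega,\ \boldsymbol u_k\in\mathcal C_{u,k}}\sum_k G_k(\boldsymbol\omega,\boldsymbol u_k)^2$ is attained by continuity of $G_k$ and compactness of the regions. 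The decisive point is that this upper bound is already achieved by the \emph{single} configuration $\boldsymbol\omega_1=\boldsymbol\omega_2=\boldsymbol\omega^\star$, $\boldsymbol u_{k,1}=\boldsymbol u_{k,2}=\boldsymbol u_k^\star$, where $(\boldsymbol\omega^\star,\{\boldsymbol u_k^\star\})$ maximizes $\Phi$; hence using identical positions in the two phases (with a common HAP-side position across all WDs) is weakly optimal for the position-dependent part.

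\emph{Step 3:} With identical positions, \eqref{eq:tau_2} gives zero movement time, so $\tau_2=0$ is feasible, and since $\tau_3\log_2(1+a/\tau_3)$ is strictly increasing in $\tau_3$ for $a>0$ while \eqref{P1_cons:b} forces $\tau_3\le T-\tau_1-\tau_2$, the whole leftover time should go to $\tau_3$. To make this rigorous for \emph{every} optimum $\mathcal Z_1^*$, I would argue by comparison: from any such $\mathcal Z_1^*$ (with \eqref{P1_cons:d} tight) construct $\mathcal Z_1'$ that keeps $\tau_1^*$ and tight $p_k$, places the antennas at the $\Phi$-maximizer, and sets $\tau_2'=0$, $\tau_3'=T-\tau_1^*\ge\tau_3^*$; the objective at $\mathcal Z_1'$ is then at least that at $\mathcal Z_1^*$ by monotonicity in $\tau_3$ and by $\Phi^{\max}\ge S^*$. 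Optimality of $\mathcal Z_1^*$ forces equality in this chain, which (using strict monotonicity and $\Phi^{\max}=S^*>0$) yields $\tau_3^*=T-\tau_1^*$, hence $\tau_2^*=0$, and therefore $\boldsymbol\omega_1^*=\boldsymbol\omega_2^*$ and $\boldsymbol u_{k,1}^*=\boldsymbol u_{k,2}^*$ for all $k$ through \eqref{eq:tau_2}. Combining these identities with the tight-power identity from Step 1 proves the proposition.

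The only non-routine step is Step 2: recognizing that, once the tight energy-causality constraint is substituted, the objective sees the positions solely through the bilinear expression $\sum_k G_k(\boldsymbol\omega_1,\boldsymbol u_{k,1})G_k(\boldsymbol\omega_2,\boldsymbol u_{k,2})$, and that Cauchy--Schwarz together with reciprocity (which makes the downlink and uplink gain functions coincide) collapses this to a quantity that a common position for both phases already attains. Tightness of the power and time-budget constraints and monotonicity in $\tau_3$ are standard; the one place that needs care is converting the final inequality chain into equalities so that the statement holds for all optimal solutions rather than merely for one.
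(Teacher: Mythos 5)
Your proof is correct and follows essentially the same route as the paper's: tightness of the energy-causality constraint, a Cauchy--Schwarz decoupling of the downlink and uplink positions (the paper packages this as Lemma~\ref{lem1} with $a_0=b_0=1$ absorbing the ``$1+$'' inside the logarithm, whereas you bound the bilinear sum $S$ directly and invoke monotonicity of $\log_2$), and a time-reallocation argument yielding $\tau_2^*=0$ and $\tau_3^*=T-\tau_1^*$. Your use of the jointly attained maximum $\Phi^{\max}$ over a single configuration $(\boldsymbol\omega,\{\boldsymbol u_k\})$ is a slightly cleaner formulation than the paper's per-user maxima $z_k^*$, but the underlying argument is the same.
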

\begin{proof}
	Please refer to the appendix. 
\end{proof}

Proposition \ref{prop1} demonstrates that in the case of continuous positioning, utilizing identical MA positions for both downlink and uplink is the optimal strategy to maximize the system sum throughput. Moreover, as this strategy does not require a second MA movement, it is not only operationally efficient but also energy-efficient. By leveraging proposition \ref{prop1}, the design of MA positions is greatly simplified, reducing (P1) to the following formulation with much fewer variables:
\begin{subequations}\label{P1_simp}
	\begin{eqnarray}
		 &\hspace{-1cm} \underset{\substack{\tau_1, \boldsymbol \omega_1, \\ \{\boldsymbol u_{k,1}\}}}{\max}& \hspace{-4mm} \left(T \!-\! \tau_1\right) \log_2\left(1 + \sum_{k=1}^K\frac{\zeta P_{\rm A}\tau_1\left|h_{k,1}(\boldsymbol \omega_1, \boldsymbol u_{k,1})\right|^4}{\sigma^2\left(T -\tau_1\right)}\right) \label{P1_simp_obj}\\
		& \hspace{-1cm} \text{s.t.}& \hspace{-4mm} 0 \leq \tau_1 \leq T,  \label{P1_simp_cons:b}\\
		&& \hspace{-4mm} \boldsymbol \omega_1\in\mathcal C_\omega,  \label{P1_simp_cons:c}\\
		&& \hspace{-4mm} \boldsymbol u_{k,1} \in\mathcal C_{u,k}, \ \forall k\in\mathcal K. \label{P1_simp_cons:d}
	\end{eqnarray}
\end{subequations}
Despite this simplification, the problem remains non-convex due to the coupling of the optimization variables in the objective function. To tackle this, we employ the AO method to decouple these variables and iteratively update them, as detailed below. 

\subsection{Proposed Algorithm for Problem \eqref{P1_simp}}\label{Sec:P1_simp_solu}
\subsubsection{Optimizing $\boldsymbol \omega_1$}
For any given $\left\lbrace\tau_1,\{\boldsymbol u_{k,1}\}\right\rbrace$,  $\boldsymbol \omega_1$ can be optimized by maximizing the expression inside the logarithm in \eqref{P1_simp_obj}, as follows
\begin{eqnarray}\label{P1_simp_sub1}
		\underset{\boldsymbol \omega_1}{\max} \hspace{2mm} \sum_{k=1}^K\mu_k\left|h_{k,1}(\boldsymbol \omega_1, \boldsymbol u_{k,1})\right|^4 \hspace{8mm}
		\text{s.t.} \hspace{2mm} \eqref{P1_simp_cons:c},
\end{eqnarray}
where $\mu_k \triangleq \frac{\zeta P_{\rm A}\tau_1}{\sigma^2\left(T- \tau_1\right)}$, $\forall k\in\mathcal K$. Note that the optimization variable $\boldsymbol \omega_1$ does not appear explicitly in the current form of the objective function. Recall that $h_{k,1}(\boldsymbol \omega_1, \boldsymbol u_{k,1}) = \bm f_k(\boldsymbol u_{k,1})^H\mathbf \Sigma_k\bm g_k(\boldsymbol \omega_1)$. To facilitate the solution of problem \eqref{P1_simp_sub1}, we define $\bm b_k^H \triangleq \bm f_k(\boldsymbol u_{k,1})^H\mathbf \Sigma_k \in \mathbb C^{1\times L_k}$ and $\bm B_k \triangleq \bm b_k\bm b_k^H \in \mathbb C^{L_k\times L_k}$, $\forall k\in\mathcal K$. With these definitions, we expand the term $\left|h_{k,1}(\boldsymbol \omega_1, \boldsymbol u_{k,1})\right|^4$ as
\begin{align}\label{eq:omega_expan}
	& \left|h_{k,1}(\boldsymbol \omega_1, \boldsymbol u_{k,1})\right|^4  = \left|\bm b_k^H\bm g_k(\boldsymbol \omega_1)\right|^4  = \left(\bm g_k^H(\boldsymbol \omega_1)\bm B_k\bm g_k(\boldsymbol \omega_1)\right)^2 \nonumber\\   
	& = \sum_{i_1=1}^{L_k}\sum_{i_2=1}^{L_k}\sum_{i_3=1}^{L_k}\sum_{i_4=1}^{L_k}\left|\left[\bm B_k\right]_{i_1,i_2}\right|\left|\left[\bm B_k\right]_{i_3,i_4}\right| \nonumber\\
	& \times e^{\jmath\left( \frac{2\pi}{\lambda}\boldsymbol \omega_1^T\left(-\boldsymbol a_k^{i_1} + \boldsymbol a_k^{i_2} -\boldsymbol a_k^{i_3} + \boldsymbol a_k^{i_4} \right) + \arg\left(\left[\boldsymbol B_k\right]_{i_1,i_2}\right) + \arg\left(\left[\boldsymbol B_k\right]_{i_3,i_4}\right)\right)} \nonumber\\
	& = \sum_{i_1=1}^{L_k}\sum_{i_2=1}^{L_k}\sum_{i_3=1}^{L_k}\sum_{i_4=1}^{L_k}\left|\left[\bm B_k\right]_{i_1,i_2}\right|\left|\left[\bm B_k\right]_{i_3,i_4}\right|\nonumber\\
	& \times \cos\Bigg( \frac{2\pi}{\lambda}\boldsymbol \omega_1^T\left(-\boldsymbol a_k^{i_1} + \boldsymbol a_k^{i_2} -\boldsymbol a_k^{i_3} + \boldsymbol a_k^{i_4} \right) + \arg\left(\left[\boldsymbol B_k\right]_{i_1,i_2}\right) \nonumber\\ 
	& + \arg\left(\left[\boldsymbol B_k\right]_{i_3,i_4}\right)\Bigg)  \nonumber\\
	& \triangleq \sum_{i_1=1}^{L_k}\sum_{i_2=1}^{L_k}\sum_{i_3=1}^{L_k}\sum_{i_4=1}^{L_k}\left|\left[\bm B_k\right]_{i_1,i_2}\right|\left|\left[\bm B_k\right]_{i_3,i_4}\right| \nonumber\\
	& \times \cos\left(\varpi_{k,i_1,i_2,i_3,i_4}(\boldsymbol \omega_1)\right) \triangleq \Omega_k(\boldsymbol \omega_1),  
\end{align}
where $\varpi_{k,i_1,i_2,i_3,i_4}(\boldsymbol \omega_1) \triangleq \frac{2\pi}{\lambda}\boldsymbol \omega_1^T\left(-\boldsymbol a_k^{i_1} + \boldsymbol a_k^{i_2} -\boldsymbol a_k^{i_3} + \boldsymbol a_k^{i_4} \right) + \arg\left(\left[\boldsymbol B_k\right]_{i_1,i_2}\right) + \arg\left(\left[\boldsymbol B_k\right]_{i_3,i_4}\right)$. Observe that $\Omega_k(\boldsymbol \omega_1)$ does not exhibit concavity or convexity with respect to $\boldsymbol \omega_1$, rendering the maximization of $\sum_{k=1}^K\mu_k\Omega_k(\boldsymbol \omega_1)$ a non-convex problem. Nevertheless, we find that $\Omega_k(\boldsymbol \omega_1)$ possesses bounded curvature, i.e., there exists a positive real number $\psi_k$ such that $\psi_k\mathbf I \succeq \nabla^2 \Omega_k(\boldsymbol \omega_1)$. This allows us to leverage the SCA technique to solve this problem. Specifically, with given local point $\boldsymbol \omega_1^r$ in the $r$-th iteration, we can obtain the following global lower bound for $\Omega_k(\boldsymbol \omega_1)$ by modifying \cite[(25)]{2017_Sun_MM}: 
\begin{align}\label{ineq:surrogate_func_omega}
	\Omega_k(\boldsymbol \omega_1) & \geq  \Omega_k(\boldsymbol \omega_1^r) + \nabla \Omega_k(\boldsymbol \omega_1^r)^T(\boldsymbol \omega_1 - \boldsymbol \omega_1^r) \nonumber\\
	& \hspace{4mm} - \frac{\psi_k}{2}(\boldsymbol \omega_1 - \boldsymbol \omega_1^r)^T(\boldsymbol \omega_1 - \boldsymbol \omega_1^r) \triangleq \Omega_k^{\rm lb,\it r}(\boldsymbol \omega_1), 
\end{align} 
where $\nabla \Omega_k(\boldsymbol \omega_1^r)^T = \left[\frac{\partial \Omega_k(\boldsymbol \omega_1)}{\partial x_{\omega,1}}\Big|_{\boldsymbol \omega_1 =\boldsymbol \omega_1^r}, \frac{\partial \Omega_k(\boldsymbol \omega_1)}{\partial y_{\omega,1}}\Big|_{\boldsymbol \omega_1 =\boldsymbol \omega_1^r}\right]$ with
\begin{subequations}\label{eq:first-order}
	\begin{align}
		& \frac{\partial \Omega_k(\boldsymbol \omega_1)}{\partial x_{\omega,1}}\Big|_{\boldsymbol \omega_1 = \boldsymbol \omega_1^r} =  -\frac{2\pi}{\lambda}\sum_{i_1=1}^{L_k}\sum_{i_2=1}^{L_k}\sum_{i_3=1}^{L_k}\sum_{i_4=1}^{L_k}\left|\left[\bm B_k\right]_{i_1,i_2}\right|\nonumber\\
		& \hspace{5mm}\times \left|\left[\bm B_k\right]_{i_3,i_4}\right|\alpha_{k,i_1,i_2,i_3,i_4}\sin\left(\varpi_{k,i_1,i_2,i_3,i_4}(\boldsymbol \omega_1^r)\right), \\
		& \frac{\partial \Omega_k(\boldsymbol \omega_1)}{\partial y_{\omega,1}}\Big|_{\boldsymbol \omega_1 = \boldsymbol \omega_1^r} =  -\frac{2\pi}{\lambda}\sum_{i_1=1}^{L_k}\sum_{i_2=1}^{L_k}\sum_{i_3=1}^{L_k}\sum_{i_4=1}^{L_k}\left|\left[\bm B_k\right]_{i_1,i_2}\right|\nonumber\\
		& \hspace{5mm} \times \left|\left[\bm B_k\right]_{i_3,i_4}\right|\beta_{k,i_1,i_2,i_3,i_4}\sin\left(\varpi_{k,i_1,i_2,i_3,i_4}(\boldsymbol \omega_1^r)\right).
	\end{align}
\end{subequations}
In \eqref{eq:first-order}, $\alpha_{k,i_1,i_2,i_3,i_4} \triangleq -\sin\theta_k^{i_1}\cos\phi_k^{i_1} + \sin\theta_k^{i_2}\cos\phi_k^{i_2} - \sin\theta_k^{i_3}\cos\phi_k^{i_3} + \sin\theta_k^{i_4}\cos\phi_k^{i_4}$ and $\beta_{k,i_1,i_2,i_3,i_4} \triangleq -\cos\theta_k^{i_1} + \cos\theta_k^{i_2} - \cos\theta_k^{i_3} + \cos\theta_k^{i_4}$. Moreover, the value of $\psi_k$ that satisfies $\psi_k\mathbf I \succeq \nabla^2 \Omega_k(\boldsymbol \omega_1)$ can be determined by choosing $\psi_k$ such that $\psi_k \geq \left\|\nabla^2 \Omega_k(\boldsymbol \omega_1)\right\|_F$, since $\left\|\nabla^2 \Omega_k(\boldsymbol \omega_1)\right\|_F\mathbf I \succeq \left\|\nabla^2 \Omega_k(\boldsymbol \omega_1)\right\|_2\mathbf I \succeq \nabla^2 \Omega_k(\boldsymbol \omega_1)$. To proceed,  $\left\|\nabla^2 \Omega_k(\boldsymbol \omega_1)\right\|_F$ can be computed as 
\begin{align}
	& \left\|\nabla^2 \Omega_k(\boldsymbol \omega_1)\right\|_F \nonumber\\
	& = \Bigg[\left(\frac{\partial \Omega_k(\boldsymbol \omega_1)}{\partial x_{\omega,1}\partial x_{\omega,1}}\right)^2 + \left(\frac{\partial \Omega_k(\boldsymbol \omega_1)}{\partial x_{\omega,1}\partial y_{\omega,1}}\right)^2 + \left(\frac{\partial \Omega_k(\boldsymbol \omega_1)}{\partial y_{\omega,1}\partial x_{\omega,1}}\right)^2 \nonumber\\
	& \hspace{4mm} + \left(\frac{\partial \Omega_k(\boldsymbol \omega_1)}{\partial y_{\omega,1}\partial y_{\omega,1}}\right)^2\Bigg]^{\frac{1}{2}},
\end{align}
where 
\begin{subequations}\label{eq:second-order}
	\begin{align}
		& \frac{\partial \Omega_k(\boldsymbol \omega_1)}{\partial x_{\omega,1}\partial x_{\omega,1}} = -\frac{4\pi^2}{\lambda^2}\sum_{i_1=1}^{L_k}\sum_{i_2=1}^{L_k}\sum_{i_3=1}^{L_k}\sum_{i_4=1}^{L_k}\left|\left[\bm B_k\right]_{i_1,i_2}\right|\nonumber\\
		& \hspace{5mm}\times \left|\left[\bm B_k\right]_{i_3,i_4}\right|\alpha_{k,i_1,i_2,i_3,i_4}^2\cos\left(\varpi_{k,i_1,i_2,i_3,i_4}(\boldsymbol \omega_1)\right), \\
		& \frac{\partial \Omega_k(\boldsymbol \omega_1)}{\partial x_{\omega,1}\partial y_{\omega,1}} = \frac{\partial \Omega_k(\boldsymbol \omega_1)}{\partial y_{\omega,1}\partial x_{\omega,1}} \nonumber\\
		& =   -\frac{4\pi^2}{\lambda^2}\sum_{i_1=1}^{L_k}\sum_{i_2=1}^{L_k}\sum_{i_3=1}^{L_k}\sum_{i_4=1}^{L_k}\left|\left[\bm B_k\right]_{i_1,i_2}\right|\left|\left[\bm B_k\right]_{i_3,i_4}\right|\nonumber\\
		& \hspace{5mm}\times\alpha_{k,i_1,i_2,i_3,i_4}\beta_{k,i_1,i_2,i_3,i_4}\cos\left(\varpi_{k,i_1,i_2,i_3,i_4}(\boldsymbol \omega_1)\right), \\
		& \frac{\partial \Omega_k(\boldsymbol \omega_1)}{\partial y_{\omega,1}\partial y_{\omega,1}} =  -\frac{4\pi^2}{\lambda^2}\sum_{i_1=1}^{L_k}\sum_{i_2=1}^{L_k}\sum_{i_3=1}^{L_k}\sum_{i_4=1}^{L_k}\left|\left[\bm B_k\right]_{i_1,i_2}\right|\nonumber\\
		& \hspace{5mm}\times \left|\left[\bm B_k\right]_{i_3,i_4}\right|\beta_{k,i_1,i_2,i_3,i_4}^2\cos\left(\varpi_{k,i_1,i_2,i_3,i_4}(\boldsymbol \omega_1)\right). 
	\end{align}
\end{subequations}
Assuming that $\cos\left(\varpi_{k,i_1,i_2,i_3,i_4}(\boldsymbol \omega_1)\right) = 1$, an upper bound for $\left\|\nabla^2 \Omega_k(\boldsymbol \omega_1)\right\|_F$ can be obtained, and this bound can be selected as the value of $\psi_k$. 

With \eqref{eq:omega_expan} and \eqref{ineq:surrogate_func_omega}, the objective function of problem \eqref{P1_simp_sub1} is lower bounded by $\sum_{k=1}^K\mu_k\Omega_k^{\rm lb,\it r}(\boldsymbol \omega_1)$, which is a concave function. As a result, a lower bound of the optimal value of problem \eqref{P1_simp_sub1} can be obtained by solving the following convex problem with off-the-shelf solvers (such as CVX \cite{2004_S.Boyd_cvx}). 
\begin{eqnarray}\label{P1_simp_sub1_sca}
	\underset{\boldsymbol \omega_1}{\max} \hspace{2mm} \sum_{k=1}^K\mu_k\Omega_k^{\rm lb,\it \ell}(\boldsymbol \omega_1) \hspace{8mm}
	\text{s.t.} \hspace{2mm} \eqref{P1_simp_cons:c}.
\end{eqnarray}

\subsubsection{Optimizing $\{\boldsymbol u_{k,1}\}$}
For any given $\left\lbrace\boldsymbol \omega_1, \tau_1\right\rbrace $, $\{\boldsymbol u_{k,1}\}$ can be optimized by solving (P1) with only constraint \eqref{P1_simp_cons:d}. Since $\{\boldsymbol u_{k,1}\}$ are separable in both the objective function and the constraint, the problem can be decomposed into $K$ independent subproblems, each corresponding to a different $k\in\mathcal K$:  
\begin{subequations}\label{P1_simp_sub2}
	\begin{eqnarray}
		&\underset{\boldsymbol u_{k,1}}{\max}& \left|h_{k,1}(\boldsymbol \omega_1, \boldsymbol u_{k,1})\right|^4\\
		&\text{s.t.}& \hspace{-1mm} \boldsymbol u_{k,1} \in\mathcal C_{u,k}, \ k\in\mathcal K, \label{P1_simp_sub2_cons:b}
	\end{eqnarray}
\end{subequations}
where the logarithm and certain constant terms in the original objective function are omitted without affecting the optimality of $\boldsymbol u_{k,1}$.  Similar to \eqref{eq:omega_expan}, we expand $\left|h_{k,1}(\boldsymbol \omega_1, \boldsymbol u_{k,1})\right|^4$ to expose $\boldsymbol u_{k,1}$ as follows:
\begin{align}\label{eq:u_expan}
	& \left|h_{k,1}(\boldsymbol \omega_1, \boldsymbol u_{k,1})\right|^4 = \left|\bm f_k(\boldsymbol u_{k,1})^H\mathbf \Sigma_k\bm g_k(\boldsymbol \omega_1)\right|^4 \nonumber\\
	& = (\bm f_k(\boldsymbol u_{k,1})^H\bm D_k\bm f_k(\boldsymbol u_{k,1}))^2 \nonumber\\
	& = \sum_{j_1=1}^{\tilde L_k}\sum_{j_2=1}^{\tilde L_k}\sum_{j_3=1}^{\tilde L_k}\sum_{j_4=1}^{\tilde L_k}\left|\left[\bm D_k\right]_{j_1,j_2}\right|\left|\left[\bm D_k\right]_{j_3,j_4}\right| \nonumber\\
	& \times e^{\jmath\left(\frac{2\pi}{\lambda}\boldsymbol u_{k,1}^T\left(-\tilde{\boldsymbol a}_k^{j_1} + \tilde{\boldsymbol a}_k^{j_2} - \tilde{\boldsymbol a}_k^{j_3} + \tilde{\boldsymbol a}_k^{j_4} \right) + \arg\left(\left[\boldsymbol D_k\right]_{j_1,j_2}\right) + \arg\left(\left[\boldsymbol D_k\right]_{j_3,j_4}\right)\right)} \nonumber\\
	& = \sum_{j_1=1}^{\tilde L_k}\sum_{j_2=1}^{\tilde L_k}\sum_{j_3=1}^{\tilde L_k}\sum_{j_4=1}^{\tilde L_k}\left|\left[\bm D_k\right]_{j_1,j_2}\right|\left|\left[\bm D_k\right]_{j_3,j_4}\right| \nonumber\\
	& \times \cos\Bigg(\frac{2\pi}{\lambda}\boldsymbol u_{k,1}^T\left(-\tilde{\boldsymbol a}_k^{j_1} + \tilde{\boldsymbol a}_k^{j_2} - \tilde{\boldsymbol a}_k^{j_3} \!+\! \tilde{\boldsymbol a}_k^{j_4} \right) \!+\! \arg\left(\left[\boldsymbol D_k\right]_{j_1,j_2}\right) \nonumber\\ 
	& \hspace{2mm} + \arg\left(\left[\boldsymbol D_k\right]_{j_3,j_4}\right)\Bigg) \triangleq U(\boldsymbol u_{k,1}), 
\end{align}
where $\bm D_k \triangleq \mathbf \Sigma_k\bm g_k(\boldsymbol \omega_1)\bm g_k(\boldsymbol \omega_1)^H\mathbf \Sigma_k^H$. Since $U(\boldsymbol u_{k,1})$ has a similar form as $\Omega_k(\boldsymbol \omega_1)$ in \eqref{eq:omega_expan}, it can be handled similarly as for $\Omega_k(\boldsymbol \omega_1)$. To be specific, by applying the second-order Taylor expansion, we can obtain a global lower bound for $\boldsymbol u_{k,1}$, denoted by $U^{\rm lb,\it r}(\boldsymbol u_{k,1}) \triangleq U(\boldsymbol u_{k,1}^r) + \nabla U(\boldsymbol u_{k,1}^r)^T(\boldsymbol u_{k,1} - \boldsymbol u_{k,1}^r) - \frac{\delta_k}{2}(\boldsymbol u_{k,1} - \boldsymbol u_{k,1}^r)^T(\boldsymbol u_{k,1} - \boldsymbol u_{k,1}^r)$, where $\boldsymbol u_{k,1}^r$ is the given local point in the $r$-th iteration and $\delta_k$ is a positive real number satisfying $\delta_k\mathbf I \succeq \nabla^2 U(\boldsymbol u_{k,1})$. Then, problem \eqref{P1_simp_sub2} can be approximated as (with constant terms dropped) 
\begin{subequations}\label{P1_simp_sub2_sca}
	\begin{eqnarray}
		&\hspace{-7mm}\underset{\boldsymbol u_{k,1}}{\max}& \hspace{-1mm} - \frac{\delta_k}{2}\boldsymbol u_{k,1}^T\boldsymbol u_{k,1} + \left(\nabla U(\boldsymbol u_{k,1}^r) + \delta_k\boldsymbol u_{k,1}^r\right)^T\boldsymbol u_{k,1}\\
		&\hspace{-7mm} \text{s.t.}& \hspace{-2mm} \eqref{P1_simp_sub2_cons:b}.
	\end{eqnarray}
\end{subequations} 
If constraint \eqref{P1_simp_sub2_cons:b} is disregarded, the objective function can be maximized with a closed-form solution given by $\boldsymbol u_{k,1}^{\star} = \frac{\nabla U(\boldsymbol u_{k,1}^r)}{\delta_k} + \boldsymbol u_{k,1}^r$. If this solution meets constraint \eqref{P1_simp_sub2_cons:b}, it constitutes the optimal solution to problem \eqref{P1_simp_sub2_sca}. Otherwise, one can solve the convex problem \eqref{P1_simp_sub2_sca} optimally using standard solvers, e.g., CVX \cite{2004_S.Boyd_cvx}.

\subsubsection{Optimizing $\tau_1$}\label{Sec:P1_simp_sub1}
When all the MA positions are given, (P1) reduces to
\begin{eqnarray}\label{P1_simp_sub3}
	\underset{\tau_1}{\max} \hspace{2mm} \left(T-\tau_1\right)\log_2\left(1 + \frac{c\tau_1}{T-\tau_1}\right) \hspace{8mm}
	\text{s.t.} \hspace{2mm} \eqref{P1_simp_cons:b}, 
\end{eqnarray}
where $c\triangleq \frac{\sum_{k=1}^K{\zeta_kP_{\rm A}\left|h_{k,1}(\boldsymbol \omega_1, \boldsymbol u_{k,1})\right|^4}}{\sigma^2}$. The optimal solution of this problem is given by \cite[Theorem 1]{2022_Zheng_timesolution}
\begin{align}\label{eq:P1_simp_sub3_solution}
	\tau_1^{\star} = \frac{T\left(\exp\left(\mathcal W\left(\frac{c-1}{e}\right) + 1\right)-1\right) }{c+\exp\left(\mathcal W\left(\frac{c-1}{e}\right) + 1\right)-1},
\end{align}
where $\mathcal W\left(\cdot\right)$ represents the Lambert function.  

\begin{algorithm}[!t]  
	\caption{Proposed AO-based algorithm for problem \eqref{P1_simp}}  \label{Alg1}  
	\begin{algorithmic}[1]
		\STATE Initialize $\left\lbrace \boldsymbol \omega_1, \left\lbrace \boldsymbol u_{k,1}\right\rbrace, \tau_1\right\rbrace$ and set $r = 0$.  
		\REPEAT 
		\STATE Compute $\{\bm B_k\}$, $\{\nabla \Omega_k(\boldsymbol \omega_1^r)\}$, $\{\nabla^2 \Omega_k(\boldsymbol \omega_1)\}$, and $\{\psi_k\}$. 
		\vspace{-3.5mm}
		\STATE Update $\boldsymbol \omega_1^{r+1}$ by solving problem \eqref{P1_simp_sub1_sca} with given $\big\{\boldsymbol \omega_1^r, \{\boldsymbol u_{k,1}^r\}, \tau_1^r\big\}$.  \label{Alg:solve_sub1}
		\STATE Compute $\{\bm D_k\}$, $\{\nabla U(\boldsymbol u_{k,1}^r)\}$, $\{\nabla^2 U(\boldsymbol u_{k,1}^r)\}$, and $\{\delta_k\}$. 
		\STATE Update $\{\boldsymbol u_{k,1}^r\}$ by solving problem \eqref{P1_simp_sub2_sca} with given $\big\{\boldsymbol \omega_1^{r+1}, \{\boldsymbol u_{k,1}^r\}, \tau_1^r\big\}$. \label{Alg:solve_sub2}
		\STATE Update $\tau_1^{r+1}$ via \eqref{eq:P1_simp_sub3_solution} with given $\big\{\boldsymbol \omega_1^{r+1}, \{ \boldsymbol u_{k,1}^{r+1}\}, \tau_1^r\big\}$.  \label{Alg:solve_sub3}  
		\vspace{-3.5mm}
		\STATE $r \leftarrow r + 1$. 
		\UNTIL The fractional increase of the objective value drops below a threshold $\epsilon > 0$. 
	\end{algorithmic} 
\end{algorithm}

\subsubsection{Convergence and Complexity Analysis}
In summary, problem \eqref{P1_simp} is solved suboptimally by alternately updating the three subsets of variables, with details summarized in Algorithm \ref{Alg1}. In the following, we provide a detailed convergence proof of the proposed algorithm. Denote by $\eta\left(\boldsymbol \omega_1, \left\lbrace \boldsymbol u_{k,1} \right\rbrace, \tau_1\right)$, $\eta_1^{\rm lb}\left(\boldsymbol \omega_1, \left\lbrace \boldsymbol u_{k,1} \right\rbrace, \tau_1\right)$, and $\eta_2^{\rm lb}\left(\boldsymbol \omega_1, \left\lbrace \boldsymbol u_{k,1} \right\rbrace, \tau_1\right)$ the objective values of problems \eqref{P1_simp}, \eqref{P1_simp_sub1_sca}, and \eqref{P1_simp_sub2_sca}, respectively, evaluated at the point $\left(\boldsymbol \omega_1, \left\lbrace \boldsymbol u_{k,1} \right\rbrace, \tau_1\right)$. In step \ref{Alg:solve_sub1} of Algorithm \ref{Alg1}, we have
\begin{align}\label{Alg1_proof_1}
	\eta\left(\boldsymbol \omega_1^r, \left\lbrace \boldsymbol u_{k,1}^r\right\rbrace, \tau_1^r\right) & \overset{(a)}{=} \eta_1^{\rm lb}\left(\boldsymbol \omega_1^r, \left\lbrace \boldsymbol u_{k,1}^r\right\rbrace, \tau_1^r\right) \nonumber\\
	& \overset{(b)}{\leq} \eta_1^{\rm lb}\left(\boldsymbol \omega_1^{r+1}, \left\lbrace \boldsymbol u_{k,1}^r\right\rbrace, \tau_1^r\right) \nonumber\\
	& \overset{(c)}{\leq} \eta\left(\boldsymbol \omega_1^{r+1}, \left\lbrace \boldsymbol u_{k,1}^r\right\rbrace, \tau_1^r\right),
\end{align}
where $(a)$ holds since the second-order Taylor expansion in \eqref{ineq:surrogate_func_omega} is tight at $\omega_1^r$; $(b)$ follows from the update of $\boldsymbol \omega_1$ from $\boldsymbol \omega_1^r$ to $\boldsymbol \omega_1^{r+1}$, which maximizes the surrogate objective value $\eta_1^{\rm lb}\left(\boldsymbol \omega_1^{r+1}, \left\lbrace \boldsymbol u_{k,1}^r\right\rbrace, \tau_1^r\right)$ in step \ref{Alg:solve_sub1}; $(c)$ is due to the fact that $\eta_1^{\rm lb}$ serves as a valid lower bound to the original objective $\eta$. Similarly, in step \ref{Alg:solve_sub2}, we obtain the following: 
\begin{align}\label{Alg1_proof_2}
	\eta\left(\boldsymbol \omega_1^{r+1}, \left\lbrace \boldsymbol u_{k,1}^r\right\rbrace, \tau_1^r\right) & = \eta_2^{\rm lb}\left(\boldsymbol \omega_1^{r+1}, \left\lbrace \boldsymbol u_{k,1}^r\right\rbrace, \tau_1^r\right) \nonumber\\
	& \leq \eta_2^{\rm lb}\left(\boldsymbol \omega_1^{r+1}, \left\lbrace \boldsymbol u_{k,1}^{r+1}\right\rbrace, \tau_1^r\right) \nonumber\\
	& \leq \eta\left(\boldsymbol \omega_1^{r+1}, \left\lbrace \boldsymbol u_{k,1}^{r+1}\right\rbrace, \tau_1^r\right), 
\end{align}
where the equality and inequalities follow analogously to \eqref{Alg1_proof_1}. 
Next, in step \ref{Alg:solve_sub3}, the variable $\tau_1$ is updated by solving problem \eqref{P1_simp_sub3}, which is a simplified subproblem of problem \eqref{P1_simp} with fixed $\boldsymbol \omega_1^{r+1}$ and $\left\lbrace \boldsymbol u_{k,1}^{r+1} \right\rbrace$. As this subproblem is solved optimally, we have 
\begin{align}\label{Alg1_proof_3}
	\eta\left(\boldsymbol \omega_1^{r+1}, \left\lbrace \boldsymbol u_{k,1}^{r+1}\right\rbrace, \tau_1^r\right) \leq \eta\left(\boldsymbol \omega_1^{r+1}, \left\lbrace \boldsymbol u_{k,1}^{r+1}\right\rbrace, \tau_1^{r+1}\right).
\end{align}
Combining the results in \eqref{Alg1_proof_1}-\eqref{Alg1_proof_3}, we establish the monotonicity of the objective: 
\begin{align}
	\eta\left(\boldsymbol \omega_1^r, \left\lbrace \boldsymbol u_{k,1}^r\right\rbrace, \tau_1^r\right) \leq \eta\left(\boldsymbol \omega_1^{r+1}, \left\lbrace \boldsymbol u_{k,1}^{r+1}\right\rbrace, \tau_1^{r+1}\right),
\end{align}
i.e., the objective value of problem \eqref{P1_simp} is non-decreasing over iterations. Furthermore, as $\eta$ is upper bounded due to physical and design constraints, the convergence of Algorithm \ref{Alg1} is guaranteed.

We now turn to analyzing the complexity of the proposed algorithm. Clearly, the primary computational cost per iteration is due to the steps involved in updating $\boldsymbol \omega_1$ and $\{\boldsymbol u_{k,1}\}$. For updating $\boldsymbol \omega_1$, the complexities of computing $\{\bm B_k\}$, $\left\lbrace \nabla \Omega_k(\boldsymbol \omega_1^r)\right\rbrace$, $\left\lbrace \nabla^2 \Omega_k(\boldsymbol \omega_1)\right\rbrace$, and $\{\psi_k\}$, as well as solving problem \eqref{P1_simp_sub1_sca}, are $\mathcal O\left(\sum_{k=1}^K\left( \tilde L_kL_k + L_k^2\right) \right)$, $\mathcal O\left(\sum_{k=1}^KL_k^4\right)$, $\mathcal O\left(\sum_{k=1}^KL_k^4\right)$, $\mathcal O\left(K\right)$, and $\mathcal O(1)$, respectively. Similarly, for updating $\{\boldsymbol u_{k,1}\}$, the complexities of computing $\{\bm D_k\}$, $\left\lbrace \nabla U(\boldsymbol u_{k,1}^r)\right\rbrace$, $\left\lbrace \nabla^2 U(\boldsymbol u_{k,1}^r)\right\rbrace$, and $\{\delta_k\}$, and solving $K$ subproblems of the form \eqref{P1_simp_sub2_sca}, are $\mathcal O\left(\sum_{k=1}^K\left( \tilde L_kL_k + \tilde L_k^2\right) \right)$, $\mathcal O\left(\sum_{k=1}^K\tilde L_k^4\right)$, $\mathcal O\left(\sum_{k=1}^K\tilde L_k^4\right)$, $\mathcal O\left(K\right)$, and $\mathcal O(K)$, respectively. Combining these complexities, the total complexity of each iteration of the proposed algorithm is about $\mathcal O\left(\sum_{k=1}^K\left(\tilde L_kL_k + \tilde L_k^4 + L_k^4\right)\right)$. 

\section{Proposed Solution for Discrete Positioning Design}\label{Sec:P2_solu}
In this section, our focus shifts to solving (P2). As in the previous section, we begin by exploring whether achieving the optimum of (P2) needs different MA positions in downlink and uplink. Then, we develop an iterative algorithm to efficiently address the resulting problem.

\vspace{-2mm}
\subsection{Should MA Positions Differ for downlink and uplink?}
For problem (P2), we have the following proposition.
\begin{prop}\label{prop2}
	\rm The optimal solution of (P2), denoted by $\mathcal Z_2^* \triangleq \left\lbrace \{s_1^{m*}\}, \{s_2^{m*}\}, \{t_{k,1}^{n*}\}, \{t_{k,2}^{n *}\}, \tau_1^*, \tau_2^*, \tau_3^*, \{p_k^* \geq 0\}\right\rbrace$, satisfies $p_k^* = \frac{\zeta P_{\rm A}\left|h_{k,1}(\{s_1^{m*}\},\{t_{k,1}^{n*}\})\right|^2\tau_1^*}{\tau_3^*}$, $\forall k\in\mathcal K$, 
	$s_1^{m*} = s_2^{m*}$, $t_{k,1}^{n*} = t_{k,2}^{n*}$, $\forall k\in\mathcal K$, $\tau_2^* = 0$, and $\tau_3^* = T - \tau_1$.  
\end{prop}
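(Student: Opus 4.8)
The plan is to mirror, step by step, the argument that establishes Proposition~\ref{prop1} for the continuous case, since the structure of (P2) is identical to that of (P1) once the continuous position vectors are replaced by the discrete selection variables. First I would argue that the power constraint \eqref{P2_cons:c} is active at optimality: if $p_k^* \tau_3^* < \zeta P_{\rm A} |h_{k,1}(\{s_1^{m*}\},\{t_{k,1}^{n*}\})|^2 \tau_1^*$ for some $k$, one could strictly increase $p_k^*$ without violating any constraint, strictly increasing the objective (the argument of the logarithm is strictly increasing in each $p_k$), contradicting optimality. Hence $p_k^* = \zeta P_{\rm A} |h_{k,1}(\{s_1^{m*}\},\{t_{k,1}^{n*}\})|^2 \tau_1^* / \tau_3^*$ for all $k$, and substituting this back turns the objective into $\tau_3 \log_2\bigl(1 + \sum_k \zeta P_{\rm A} \tau_1 |h_{k,1}(\{s_1^m\},\{t_{k,1}^n\})|^2 |h_{k,2}(\{s_2^m\},\{t_{k,2}^n\})|^2 / (\sigma^2 \tau_3)\bigr)$.

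Next I would handle the time allocation. Since $\tau_2$ enters only through the sum constraint \eqref{P1_cons:b} and not the objective, and since the objective is nondecreasing in $\tau_1$ and in $\tau_3$ (for fixed product-type coefficient, $x \log_2(1 + a/x)$ with the right monotonicity in the other variable), any feasible point can be improved by driving $\tau_2$ to its smallest feasible value and then using \eqref{P1_cons:b} with equality, so $\tau_1^* + \tau_3^* = T - \tau_2^*$. I would then observe that $\tau_2$ is determined by the $\ell_1$ displacement between the phase-1 and phase-2 selected positions divided by $d$ and scaled by $\Delta_d$; in particular $\tau_2 = 0$ is achievable precisely when $s_1^m = s_2^m$ for all $m$ and $t_{k,1}^n = t_{k,2}^n$ for all $k,n$. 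The crux is therefore to show that choosing the phase-2 positions equal to the phase-1 positions (thus $\tau_2 = 0$, freeing up that time for $\tau_3$) is never suboptimal. I would establish this by a swapping/exchange argument: given any optimal solution, define a new solution that keeps the phase-1 selections $\{s_1^{m*}\}, \{t_{k,1}^{n*}\}$ and \emph{also} uses them in phase~2. Relabel which physical discrete position is called ``the phase-1 position'' so that, without loss of generality, the common position is the one that maximizes $|h_{k,1}|^2 |h_{k,2}|^2$ jointly — more precisely, since $h_{k,1}$ and $h_{k,2}$ are the downlink and uplink channels at the \emph{same} physical MA positions (by uplink-downlink reciprocity, $|h_{k,1}(\mathbf p,\mathbf q)| = |h_{k,2}(\mathbf p,\mathbf q)|$), evaluating both channels at one common set of positions is exactly what the reciprocity structure rewards.

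Here is the key simplification I would exploit: by reciprocity, $|h_{k,1}(\{s^m\},\{t_k^n\})| = |h_{k,2}(\{s^m\},\{t_k^n\})|$ for any common selection, so if we use identical selections in both phases the per-user coefficient becomes $|h_{k}|^4$ with a single channel magnitude, and $\tau_2$ can be set to $0$. Thus the candidate ``identical-position'' solution has objective $(T-\tau_1)\log_2\bigl(1 + \sum_k \zeta P_{\rm A} \tau_1 |h_{k,1}(\{s_1^m\},\{t_{k,1}^n\})|^4 / (\sigma^2 (T - \tau_1))\bigr)$, which is exactly the discrete analogue of \eqref{P1_simp}. To finish, I would show this candidate dominates the original optimum: starting from $\mathcal Z_2^*$, first replace $\tau_3^*$ by $\tau_3^* + \tau_2^*$ and set $\tau_2 = 0$ if $s_1^{m*} = s_2^{m*}$, $t_{k,1}^{n*} = t_{k,2}^{n*}$ already hold — if not, compare the original coefficient $|h_{k,1}(\{s_1^{m*}\})|^2|h_{k,2}(\{s_2^{m*}\})|^2$ against what we get by forcing phase-2 selections to equal the phase-1 ones (or vice versa, whichever index set yields the larger value of $\prod$-type product summed over $k$), noting that at least one of the two ``collapse'' choices does not decrease $\sum_k(\cdot)$; then the freed time $\tau_2^* > 0$ strictly helps via monotonicity of $x\log_2(1 + a/x)$. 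The main obstacle I anticipate is the last point: unlike the continuous case where one can smoothly move one antenna to match the other, here we have discrete sets and a sum over users, so I cannot simply pick each user's best position independently — the HAP position $\{s^m\}$ is shared across all users. The clean resolution is that we are not re-optimizing the HAP position; we are only asserting that \emph{collapsing to one time slot} (either keeping phase-1's selections or keeping phase-2's selections for both phases) weakly improves the objective, and then the subsequent joint optimization over the now-single position block is deferred to the algorithm. I would present this as: for any optimal $\mathcal Z_2^*$, at least one of the two solutions ``use $\{s_1^{m*}\},\{t_{k,1}^{n*}\}$ in both phases'' or ``use $\{s_2^{m*}\},\{t_{k,2}^{n*}\}$ in both phases'' attains an objective no smaller than $\mathcal Z_2^*$ once $\tau_2$ is reclaimed — this follows because the product $|h_{k,1}|^2|h_{k,2}|^2 \le \max(|h_{k,1}|^4,|h_{k,2}|^4)$ pointwise in $k$ only after reciprocity identifies the two magnitudes, so in fact $|h_{k,1}(\{s_1^{m*}\},\{t_{k,1}^{n*}\})|^2|h_{k,2}(\{s_2^{m*}\},\{t_{k,2}^{n*}\})|^2$ is sandwiched and one of the collapsed choices matches or beats it for the \emph{sum}. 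I would then cite the appendix proof of Proposition~\ref{prop1} and note the discrete argument is verbatim with $\mathcal C_\omega, \mathcal C_{u,k}$ replaced by the finite candidate sets, so \eqref{P2} reduces to the discrete counterpart of \eqref{P1_simp}.
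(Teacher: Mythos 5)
Your overall route is the same as the paper's: show the energy-causality constraints are active, substitute $p_k^*$ back to expose the product $\left|h_{k,1}\right|^2\left|h_{k,2}\right|^2$, argue that collapsing to a single common position set lets you set $\tau_2=0$ and reallocate that time, and use reciprocity to turn the product into $\left|h_{k,1}\right|^4$. The paper's own proof of Proposition~\ref{prop2} is just a pointer to the appendix proof of Proposition~\ref{prop1}, and your skeleton matches it. However, there is a genuine gap at the crux: your justification of the collapse step. You assert that because $\left|h_{k,1}\right|^2\left|h_{k,2}\right|^2 \le \max\left(\left|h_{k,1}\right|^4,\left|h_{k,2}\right|^4\right)$ \emph{pointwise in $k$}, one of the two collapsed choices ``matches or beats it for the sum.'' That is a non sequitur: summing the pointwise bound only yields $\sum_k a_k b_k \le \sum_k \max\left(a_k^2,b_k^2\right)$, and the right-hand side can strictly exceed both $\sum_k a_k^2$ and $\sum_k b_k^2$ (e.g., $a=(2,1)$, $b=(1,2)$ gives $\sum_k\max=8$ while each collapsed sum is $5$). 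Since the maximizer of $\max(a_k^2,b_k^2)$ can be phase~1 for some users and phase~2 for others, and the HAP selection is shared across users, the pointwise inequality does not license picking a single common phase.

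The correct repair is exactly the paper's Lemma~\ref{lem1}: by Cauchy--Schwarz, $1+\sum_k a_k b_k \le \sqrt{\left(1+\sum_k a_k^2\right)\left(1+\sum_k b_k^2\right)}$, so after taking $\log_2$ the mixed-position objective is at most the arithmetic mean of the two collapsed objectives, hence at most their maximum. This inequality operates at the level of the \emph{sums}, which is what the shared-HAP coupling requires, and it applies verbatim with the discrete selection variables since nothing in Lemma~\ref{lem1} uses continuity of the feasible set. With that substitution your argument closes: whichever collapsed solution attains the max is feasible for (P2), its $\tau_2$ is zero, and reassigning any positive $\tau_2^*$ to $\tau_1$ (or $\tau_3$) strictly increases the objective, forcing $s_1^{m*}=s_2^{m*}$, $t_{k,1}^{n*}=t_{k,2}^{n*}$, $\tau_2^*=0$, and $\tau_1^*+\tau_3^*=T$ at the optimum.
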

\begin{proof}
	The proof is similar to that of Proposition \ref{prop1} given in the appendix, and we omit it for brevity. 
\end{proof}
We observe that the results of Proposition \ref{prop2} are analogous to those in Proposition \ref{prop1}, albeit for a different MA movement pattern. This indicates that regardless of whether the movement pattern is continuous or discrete, using identical MA positions for both downlink and uplink is the optimal strategy to maximize the system sum throughput. Essentially, the key to this optimality lies in maintaining consistent channel conditions for both the downlink and uplink transmissions. According to Proposition \ref{prop2}, we only need to focus on solving the following simplified problem: 
\begin{subequations}\label{P2_simp}
	\begin{eqnarray}
		&\hspace{-2mm} \underset{\substack{\tau_1, \{s_1^m\},\\ \{t_{k,1}^n\}}}{\max}& \hspace{-4mm} \left(T \!-\! \tau_1\right) \log_2\left(1 \!+\! \sum_{k=1}^K\frac{\zeta P_{\rm A}\tau_1\left|h_{k,1}(\{s_1^m\},\{t_{k,1}^n\})\right|^4}{\sigma^2\left(T-\tau_1\right) }\right) \nonumber\\
		\\
		&\text{s.t.}& \hspace{-3mm}  0 \leq \tau_1 \leq T, \label{P2_simp_cons:b}\\
		&& \hspace{-3mm} s_1^m \in \{0,1\}, \ \forall m \in\mathcal N_\omega, \label{P2_simp_cons:c}\\
		&& \hspace{-3mm} \sum_{m=1}^{N_\omega}s_1^m = 1, \  \label{P2_simp_cons:d}\\
		&& \hspace{-3mm} t_{k,1}^n \in \{0,1\}, \ \forall k\in\mathcal K, n \in\mathcal N_{u,k},\label{P2_simp_cons:e}\\
		&& \hspace{-3mm} \sum_{n=1}^{N_{u,k}}t_{k,1}^n = 1, \ \forall k\in\mathcal K, \label{P2_simp_cons:f} 
	\end{eqnarray}
\end{subequations}
which is still a mixed-integer non-convex optimization problem. To solve it, we apply the AO method to divide the optimization variables into three blocks, as elaborated below. 

\subsection{Proposed Algorithm for Problem \eqref{P2_simp}}\label{Sec:P2_simp_solu}
\subsubsection{Optimizing $\{s_1^m\}$} 
For given $\left\lbrace\tau_1, \{t_{k,1}^n\}\right\rbrace$,  $\{s_1^m\}$ can be optimized by solving 
\begin{subequations}\label{P2_simp_sub1}
	\begin{eqnarray}
		&\hspace{-4mm}\underset{\{s_1^m\}}{\max}& \hspace{-3.5mm} \sum_{k=1}^K\mu_k\left|\left( \sum_{n=1}^{N_{u,k}}t_{k,1}^n\bm f_k(\mathbf q_{k,n})\right)^H \! \mathbf \Sigma_k \!\left(\sum_{m=1}^{N_\omega}s_1^m\bm g_k(\mathbf p_m)\right)\right|^4 \nonumber\\
		\\
		&\text{s.t.}& \hspace{-2mm} \eqref{P2_simp_cons:c}, \eqref{P2_simp_cons:d},
	\end{eqnarray}
\end{subequations}
where $\mu_k$ is defined as $\frac{\zeta P_{\rm A}\tau_1}{\sigma^2\left(T- \tau_1\right)}$, as introduced in the previous section. To facilitate the solution design, we further define $\mathbf y_k^H \triangleq \left(\sum_{n=1}^{N_{u,k}}t_{k,1}^n\bm f_k(\mathbf q_{k,n})\right)^H\mathbf \Sigma_k \in \mathbb C^{1\times L_k}$, 
$\mathbf G_k \triangleq \left[\bm g_k(\mathbf p_1), \bm g_k(\mathbf p_2), \cdots, \bm g_k(\mathbf p_{L_k})\right] \in \mathbb C^{L_k \times N_\omega}$, $\mathbf o_k^H \triangleq \mathbf y_k^H\mathbf G_k \in \mathbb C^{1 \times N_\omega}$, and $\bm s_1 \triangleq \left[s_1^1,s_1^2,\cdots, s_1^{N_\omega}\right]^T \in \mathbb R^{N_\omega\times 1}$. Then, we have $\left|\left(\sum_{n=1}^{N_{u,k}}t_{k,1}^n\bm f_k(\mathbf q_{k,n})\right)^H\mathbf \Sigma_k\left(\sum_{m=1}^{N_\omega}s_1^m\bm g_k(\mathbf p_m)\right)\right|^4 = \left|\mathbf y_k^H\mathbf G_k\bm s_1\right|^4 = \left|\mathbf o_k^H\bm s_1\right|^4$. Accordingly, problem \eqref{P2_simp_sub1} can be equivalently expressed as 
\begin{eqnarray}\label{P_dis_sub1_eqv}
	\underset{\boldsymbol s_1}{\max}  \hspace{2mm} \sum_{k=1}^K\mu_k\left|\mathbf o_k^H\bm s_1\right|^4  \hspace{8mm}
	\text{s.t.} \hspace{2mm} \eqref{P2_simp_cons:c}, \eqref{P2_simp_cons:d}.
\end{eqnarray}
Then, the optimal solution can be easily determined as 
\begin{align}\label{equ:P2_simp_sub2_opt}
	\bm s_1^{\star} = \arg\underset{\boldsymbol s_1\in\mathcal S}{\max}\sum_{k=1}^K\mu_k\left|\mathbf o_k^H\bm s_1\right|^4,
\end{align}
where $\mathcal S\triangleq \{\mathbf e_1,\cdots,\mathbf e_{N_{\omega}}\}$, with $\mathbf e_m$ denoting the standard basis vector in $N_\omega$-dimensional space, having 1 at the $m$-th position and 0 at all other positions. For obtaining $\bm s_1^{\star}$, calculating $\left\lbrace\mathbf o_k\right\rbrace$ and $\left\lbrace \sum_{k=1}^K\mu_k\left|\mathbf o_k^H\bm s_1\right|^4 \Big|\boldsymbol s_1\in\mathcal S\right\rbrace$ have complexities of $\mathcal O\left(\sum_{k=1}^K\left(N_{u,k}\tilde L_k + \tilde L_k L_k + L_kN_\omega\right)\right)$ and $\mathcal O\left(KN_\omega^2\right)$, respectively. Thus, the overall complexity for acquiring $\bm s_1^{\star}$ is $\mathcal O\left(KN_\omega^2 + \sum_{k=1}^K\left(N_{u,k}\tilde L_k + \tilde L_k L_k + L_kN_\omega\right)\right)$. 

However, due to the special structure of $\boldsymbol s_1$, this complexity can be reduced. Specifically, we have   
\begin{align}\label{equ:trans_bina}
	&\sum_{k=1}^K\mu_k\left|\mathbf o_k^H\bm s_1\right|^4 \nonumber\\
	& \overset{(a)}{=} \sum_{k=1}^K\mu_k \left[\left| \left[\mathbf o_k^H\right]_1\right|^4, \left|\left[\mathbf o_k^H\right]_2\right|^4, \cdots, \left| \left[\mathbf o_k^H\right]_{N_\omega}\right|^4\right]\bm s_1 \nonumber\\
	& \triangleq \mathbf o^H\bm s_1, 
\end{align}
where the key equation $(a)$ holds because the vector $\bm s_1$ has exactly one element equal to 1 at an arbitrary position, with all other elements being 0, according to constraints \eqref{P2_simp_cons:c} and \eqref{P2_simp_cons:d}. In addition, the $m$-th element of $\mathbf o^H$ is given by $\sum_{k=1}^K\mu_k\left|\left[\mathbf o_k^H\right]_m\right|^4$. Then, the optimal solution for maximizing $\mathbf o^H\bm s_1$ is obtained as 
\begin{align}\label{equ:P2_simp_sub2_opt2}
	\bm s_1^{\star} = \mathbf e_{m^\star}, \ m^\star = \arg\underset{m\in\{1,2,\cdots,N_\omega\}}{\max}\left[\mathbf o^H\right]_m.
\end{align}
By utilizing this approach, the complexity of obtaining $\bm s_1^{\star}$ mainly depends on calculating $\left\lbrace\mathbf o_k\right\rbrace $ and $\bm o^H$, with complexities $\mathcal O\left(\sum_{k=1}^K\left(N_{u,k}\tilde L_k + \tilde L_k L_k + L_kN_\omega\right)\right)$ and $\mathcal O\left(KN_\omega\right)$, respectively. Given that $L_k \geq 1$, $\forall k\in\mathcal K$, the total complexity is $\mathcal O\left(\sum_{k=1}^K\left(N_{u,k}\tilde L_k + \tilde L_k L_k + L_kN_\omega\right)\right)$, which is smaller than $\mathcal O\left(KN_\omega^2 + \sum_{k=1}^K\left(N_{u,k}\tilde L_k + \tilde L_k L_k + L_kN_\omega\right)\right)$.    
 
\subsubsection{Optimizing $\{t_{k,1}^n\}$}
Given $\left\lbrace\tau_1,\{s_1^m\}\right\rbrace$, the optimization of $\{t_{k,1}^n\}$ can be executed separately and simultaneously for each $k\in\mathcal K$. Specifically, we define $\bm t_{k,1}^H\triangleq \left[t_{k,1}^1, t_{k,1}^2, \cdots, t_{k,1}^{N_{u,k}}\right] \in \mathbb R^{1\times N_{u,k}}$ and $\mathbf z_k \triangleq \mathbf F_k^H\mathbf x_k \in \mathbb C^{N_{u,k} \times 1}$ with $\mathbf x_k \triangleq \mathbf \Sigma_k\left( \sum_{m=1}^{N_\omega}s_1^m\bm g_k(\mathbf p_m)\right) \in \mathbb C^{\tilde L_k \times 1}$ and $\mathbf F_k = \left[\bm f_k(\mathbf q_{k,1}), \bm f_k(\mathbf q_{k,2}), \cdots, \bm f_k(\mathbf q_{k,N_{u,k}})\right] \in \mathbb C^{\tilde L_k \times N_{u,k}}$. Then, the subproblem with respect to $\bm t_{k,1}$ is given by 
\begin{subequations}\label{P2_simp_sub3}
	\begin{eqnarray}
		&\underset{\boldsymbol t_{k,1}}{\max}& \hspace{-2mm} \left|\bm t_{k,1}^H\mathbf z_k\right|^4\\
		&\text{s.t.}& \hspace{-2mm} t_{k,1}^n \in \{0,1\}, \ k\in\mathcal K, n \in\mathcal N_{u,k}, \label{P2_simp_sub3_cons:b}\\
		&& \hspace{-2mm} \sum_{n=1}^{N_{u,k}}t_{k,1}^n = 1, \ k\in\mathcal K, \label{P2_simp_sub_cons:c}
	\end{eqnarray}
\end{subequations}
whose optimal solution is straightforwardly obtained as
\begin{align}\label{equ:P2_simp_sub3_opt}
	\bm t_{k,1}^{\star} = \arg\underset{\boldsymbol t_{k,1}\in\mathcal T}{\max}\left|\bm t_{k,1}^H\mathbf z_k\right|^4,
\end{align}
where $\mathcal T\triangleq \{\mathbf e_1,\cdots,\mathbf e_{N_{u,k}}\}$. 
The complexity of calculating $\mathbf z_k$ and evaluating the set $\left\lbrace \left|\bm t_{k,1}^H\mathbf z_k\right|^4 \Big| \boldsymbol t_{k,1}\in\mathcal T \right\rbrace $ are $\mathcal O\left(N_\omega L_k + \tilde L_k L_k + N_{u,k}\tilde L_k\right)$ and $\mathcal O\left(N_{u,k}^2\right)$, respectively. Therefore, the total complexity of obtaining $\bm t_{k,1}^{\star}$ using \eqref{equ:P2_simp_sub3_opt} is $\mathcal O\left(N_{u,k}^2 + N_\omega L_k + \tilde L_k L_k + N_{u,k}\tilde L_k\right)$. Nevertheless, this complexity can be reduced similarly to the process of optimizing $\bm s_1$ by deriving that
\begin{align}
	\left|\bm t_{k,1}^H\mathbf z_k\right|^4 & = \bm t_{k,1}^H\left[\left|\left[\mathbf z_k\right]_1\right|^4, \left|\left[\mathbf z_k\right]_2\right|^4, \cdots, \left|\left[\mathbf z_k\right]_{N_{u,k}}\right|^4\right]^T \nonumber\\
	& \triangleq \bm t_{k,1}^H\hat {\mathbf z}_k.
\end{align}
Then, $\bm t_{k,1}^{\star}$ that maximizes $\bm t_{k,1}^H\hat {\mathbf z}_k$ can be obtained as
\begin{align}\label{equ:P2_simp_sub3_opt2}
	\bm t_{k,1}^{\star} = \mathbf e_{n^\star}, \ n^\star = \arg\underset{n\in\{1,2,\cdots,N_{u,k}\}}{\max}\left[\hat {\mathbf z}_k\right]_n.
\end{align}
In this way, the total complexity of obtaining $\bm t_{k,1}^{\star}$ is reduced to $\mathcal O\left(N_\omega L_k + \tilde L_k L_k + N_{u,k}\tilde L_k\right)$, since the complexities of calculating $ \mathbf z_k$, computing $\hat {\mathbf z}_k$, and finding $n^\star$ are $\mathcal O\left(N_\omega L_k + \tilde L_k L_k + N_{u,k}\tilde L_k\right)$, $\mathcal O\left(N_{u,k}\right)$, and $\mathcal O\left(N_{u,k}\right)$, respectively. 

\subsubsection{Optimizing $\tau_1$} 
Given other variables, the subproblem for optimizing $\tau_1$ has a similar form to problem \eqref{P1_simp_sub3}, and the expression for its optimal solution can be obtained by replacing ``$c$'' in \eqref{eq:P1_simp_sub3_solution} with ``$\hat c\triangleq \frac{\sum_{k=1}^K{\zeta_kP_{\rm A}\left|h_{k,1}({s_1^m},{t_{k,1}^n})\right|^4}}{\sigma^2}$''.

\subsubsection{Convergence and Complexity Analysis}
Similar to the analysis in the previous section, the proposed algorithm guarantees convergence to a suboptimal solution by iteratively updating one of the three variable subsets while keeping the others fixed. Additionally, based on the results presented earlier, the total complexity of each iteration of this algorithm is about $\mathcal O\left(\sum_{k=1}^K\left(N_{u,k}\tilde L_k + \tilde L_k L_k + L_kN_\omega\right)\right)$. 

\section{Simulation Results}\label{Sec:simu}
In the simulation setup, the system operates at a frequency of 5 GHz, corresponding to a wavelength of $\lambda = 0.06$ m \cite{2023_Yifei_discrete,2024_Weidong_graph}. The HAP is positioned at $[0,0]^T$ and the WDs are distributed randomly within a 1.5-meter disk centered at $[10,0]^T$. The geometric channel model is adopted, where $L_k = \tilde L_k \triangleq L$, $\forall k \in \mathcal{K}$. As a result, each WD's path-response matrix is diagonal, represented as $\mathbf \Sigma_k = {\rm diag}\{\sigma_{k,1},\cdots,\sigma_{k,L}\}$, where $\sigma_{k,\ell}$ follows a distribution $\mathcal{CN}\left(0,c_k^2/L\right)$, $\ell = 1,\dots,L$ \cite{2023_Lipeng_Modeling}. The term $c_k^2$ is defined as $C_0D_k^{-\alpha}$, where $C_0 = \left({\lambda}/{4\pi}\right)^2$ is the expected average channel power gain at the reference distance of 1 m, $D_k$ is the distance between WD $k$ and the HAP, and $\alpha = 2.8$ is the path-loss exponent \cite{2023_Lipeng_uplink}. Both elevation and azimuth AoDs/AoAs are assumed to be uniformly distributed within the interval $\left[0, \pi\right]$ \cite{2023_Wenyan_MIMO}. The movement regions for the MAs are defined as $\mathcal C_{\omega} = \mathcal C_k = \left[-A/2, A/2\right]\times \left[-A/2, A/2\right]$, $\forall k \in \mathcal{K}$, with $A$ intentionally limited to $A \leq 8\lambda$ to ensure far-field conditions. The Rayleigh distance corresponding to $A = 8\lambda$ is $\frac{2\times(8\lambda)^2}{\lambda} = 7.68$ m, which is smaller than the minimum separation of 8.5 m between the HAP and any WD, thereby validating the far-field assumption. 
Moreover, given the identical step size $d$ for each MA in discrete antenna positioning, we have $N_\omega = N_{u,k} \triangleq N$, $\forall k \in \mathcal K$, where $N$ is determined by the values of $A$ and $d$. 
Unless specified otherwise, the remaining parameters are set as follows: $P_{\max} = 40$ dBm, $T = 3$ s, $K = 5$, $L = 10$, $A = 5\lambda$, $d = 1/4\lambda$, $v = 0.125$ m/s \cite{2022_Xingxing_MAspeed}, and $\sigma^2 = -90$ dBm \cite{2024_Jinze_full-duplex}. Each simulation result is obtained by averaging over 500 independent realizations with randomly generated user locations and channel realizations.


For comparison, we consider the following schemes: 
\begin{itemize}
	\item \textbf{Proposed continuous MA:} the approach in Section \ref{Sec:P1_simp_solu}. 
	\item \textbf{Proposed discrete MA:} the approach in Section \ref{Sec:P2_simp_solu}. 
	\item \textbf{Partially MA:} in each channel realization, only $(K+1)/2$ antennas are randomly selected for free movement, while the other $(K+1)/2$ antennas remain fixed at the reference points within their respective moving regions. The positions of the $(K+1)/2$ MAs are jointly optimized along with the time allocation. 
	\item \textbf{Random MA position:} in each channel realization, we randomly and independently generate 500 samples of $\left\lbrace\{s_1^m\},\{t_{k,1}^n\}\right\rbrace$, ensuring each of them satisfies the constraints in \eqref{P2_simp_cons:c}-\eqref{P2_simp_cons:f}. For each sample, the time allocation is optimized. The highest-performing solution from these 500 samples is chosen as the final output of this scheme. 
	\item \textbf{FPA w/ compensation time:} the time allocation is optimized with all $K+1$ antennas configured as FPAs. To ensure a fair comparison with the continuous MA scheme, this baseline incorporates an additional compensation transmission time, denoted by $\tau_0(v)$, which corresponds to the initial movement duration required by the proposed continuous MA scheme. The movements are initialized from the reference points defined in Section \ref{Sec:model_and_formulation}, located within the designated moving regions. These reference points also serve as the locations of the FPAs, thereby enabling a consistent comparison between the MA and FPA schemes. Similar to \eqref{eq:tau_2}, the value of $\tau_0(v)$ is given by 
	\begin{align}
		\tau_0(v) = \max\left(\frac{\left\|\boldsymbol \omega_1 - \boldsymbol \omega_0\right\|_1}{v}, \max_{k\in\mathcal K}\frac{\left\| \boldsymbol u_{k,1} - \boldsymbol u_{k,0}\right\|_1}{v}\right), 
	\end{align}
	and the total transmission time becomes $T + \tau_0(v)$. 
	\item \textbf{FPA w/o compensation time:} the time allocation is optimized with all $K+1$ antennas being FPAs. The total transmission time remains $T$, consistent with the proposed two MA schemes.  
\end{itemize}

\begin{figure}[!t]
	\centering
	\includegraphics[scale=0.68]{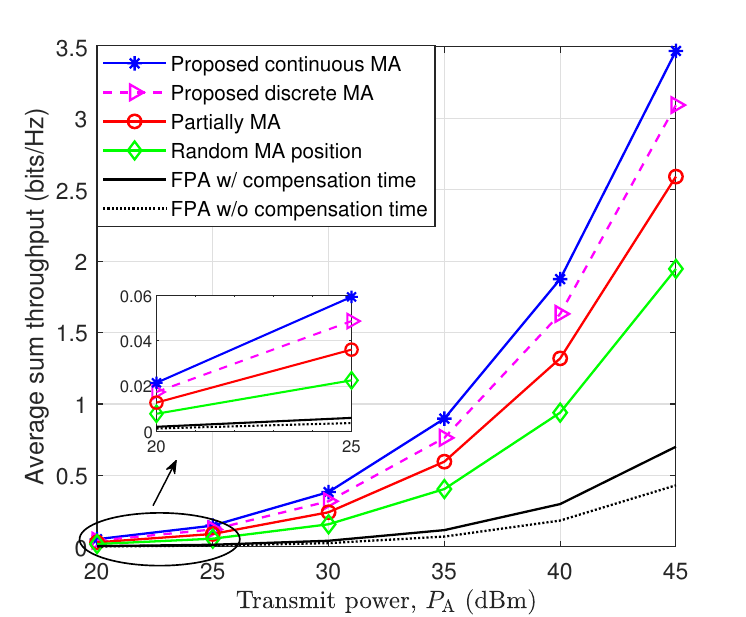}
	\caption{Average sum throughput versus transmit power at the HAP.}
	\label{fig:TP_vs_PA}
	\vspace{-2mm}
\end{figure}

Fig. \ref{fig:TP_vs_PA} plots the system sum throughput obtained by different schemes versus the transmit power $P_{\rm A}$ at the HAP. It is observed that the four MA-based schemes significantly improve the sum throughput compared to the two FPA-based schemes, even though one of the FPA schemes includes compensation transmission time. This improvement is attributed to the strategic placement of the MAs, which enhances the channel conditions between the HAP and WDs, resulting in greater efficiency in both downlink WPT and uplink WIT. In particular, the proposed continuous MA scheme consistently achieves the highest sum throughput, demonstrating performance improvements of approximately 12.30\%, 33.98\%, 77.95\%, 395.71\%, and 706.98\% over the discrete MA, partially MA, random MA position, FPA with compensation time, and FPA without compensation time schemes, respectively.  This is expected, as the proposed continuous MA scheme exploits the most spatial degrees of freedom, and the other three MA-based schemes suffer performance loss due to less flexibility in channel reconfiguration. 

\begin{figure*}[!t]
	\subfigure[]{\label{fig:Tau_vs_PA}
		\includegraphics[scale=0.68]{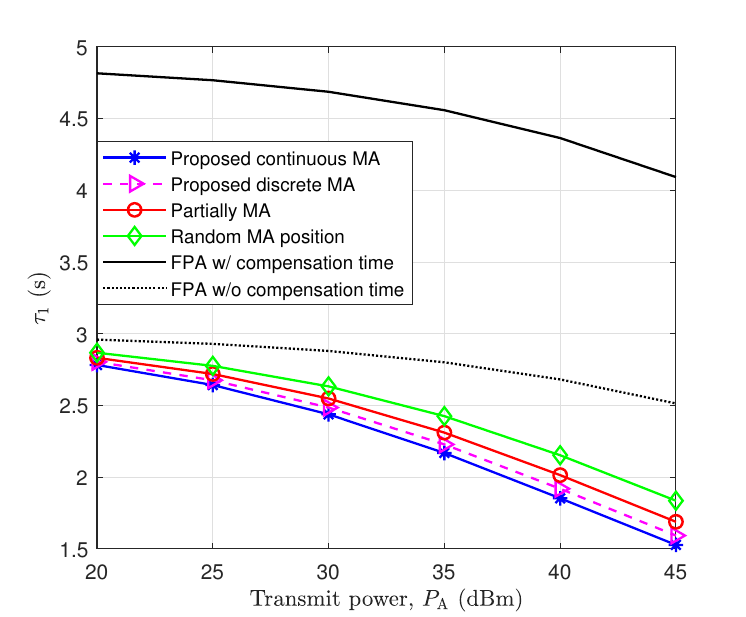}}
	\hspace{4mm}
	\subfigure[]{\label{fig:Energy_vs_PA}
		\includegraphics[scale=0.68]{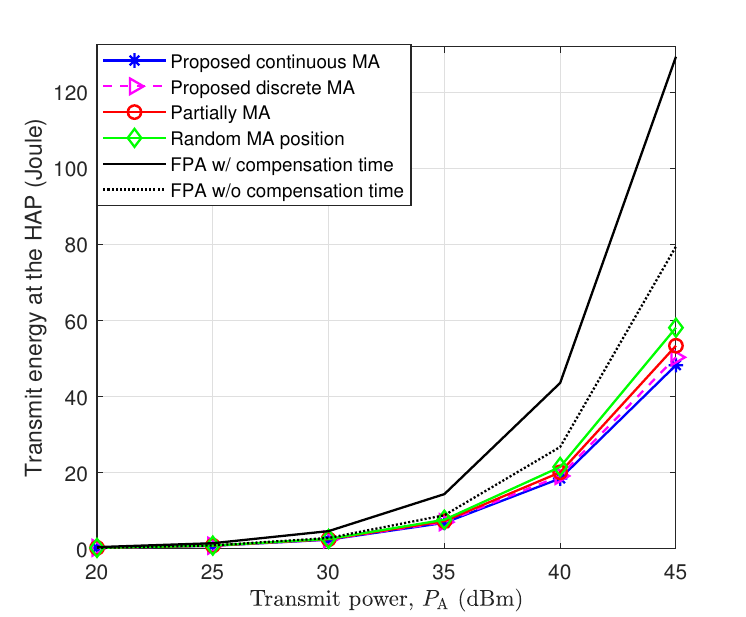}}
	\caption{(a) Average downlink WPT duration and (b) average total energy consumed at the HAP versus maximum transmit power at the HAP. }
	\label{fig:R_vs_IUs}
\end{figure*}

\begin{figure*}[!t]
	\subfigure[]{\label{fig:TP_vs_T}
		\includegraphics[scale=0.68]{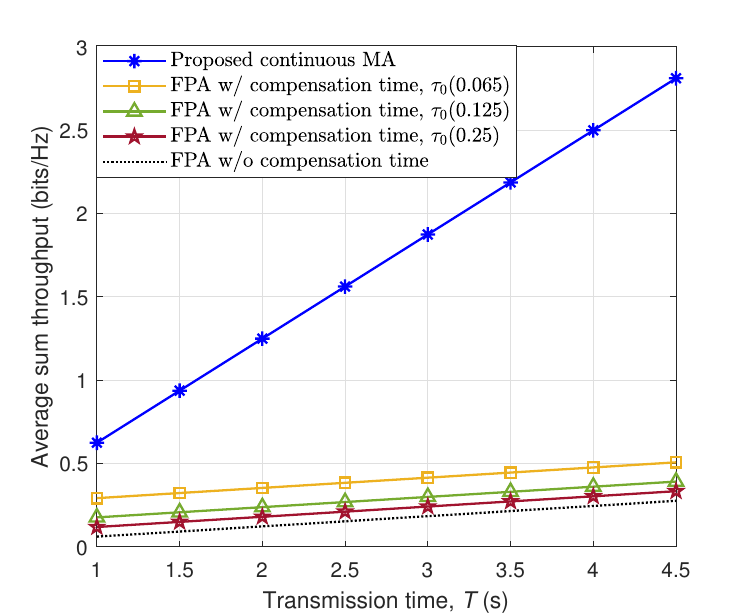}}
	\hspace{4mm}
	\subfigure[]{\label{fig:T_ratio_movT}
		\includegraphics[scale=0.68]{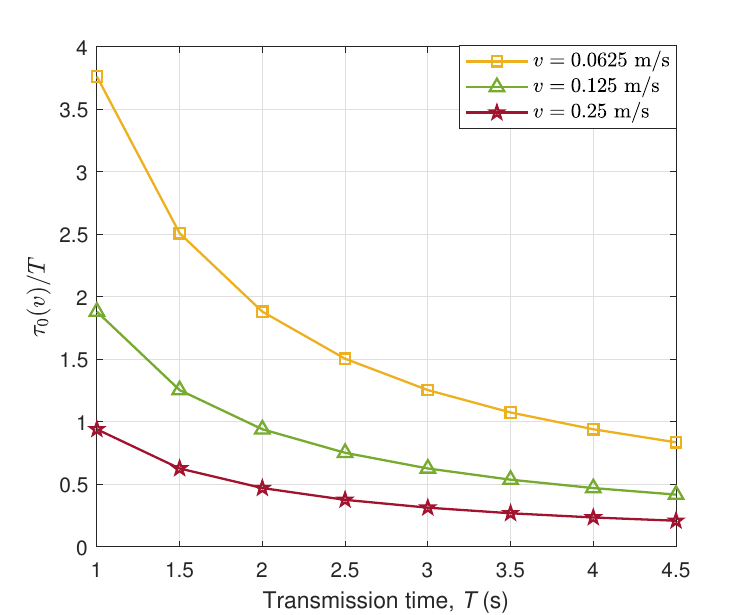}}
	\caption{(a) Average sum throughput and (b) ratio of initial movement duration to  transmission time, $\tau_0(v)/T$, under different antenna-moving speeds, both plotted versus transmission time $T$.}
	\label{fig:TP_vs_T_ab}
	\vspace{-2mm}
\end{figure*}

\begin{figure}[!h]
	\centering
	\includegraphics[scale=0.68]{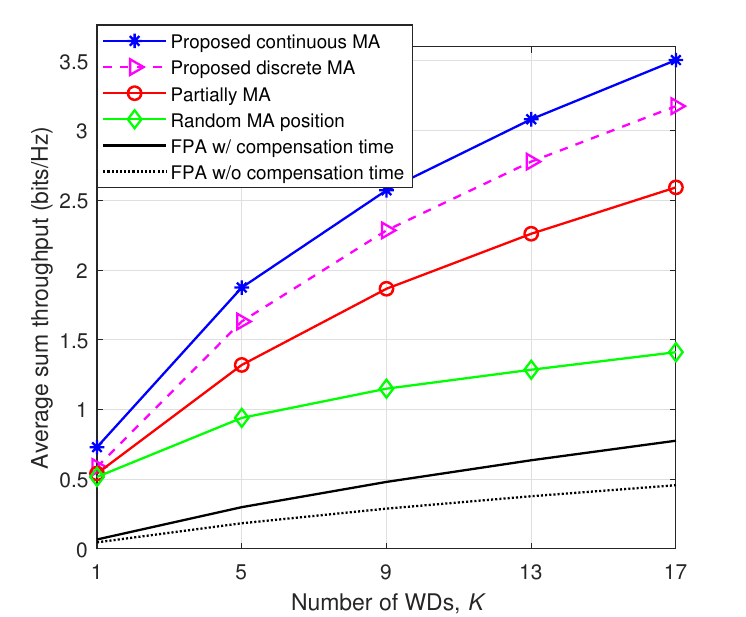}
	\caption{Average sum throughput versus number of WDs.}
	\label{fig:TP_vs_K}
	\vspace{-1mm}
\end{figure}

\begin{figure}[!t]
	\subfigure[$L = 10$.]{\label{fig:TP_vs_reg_L10}
		\includegraphics[scale=0.68]{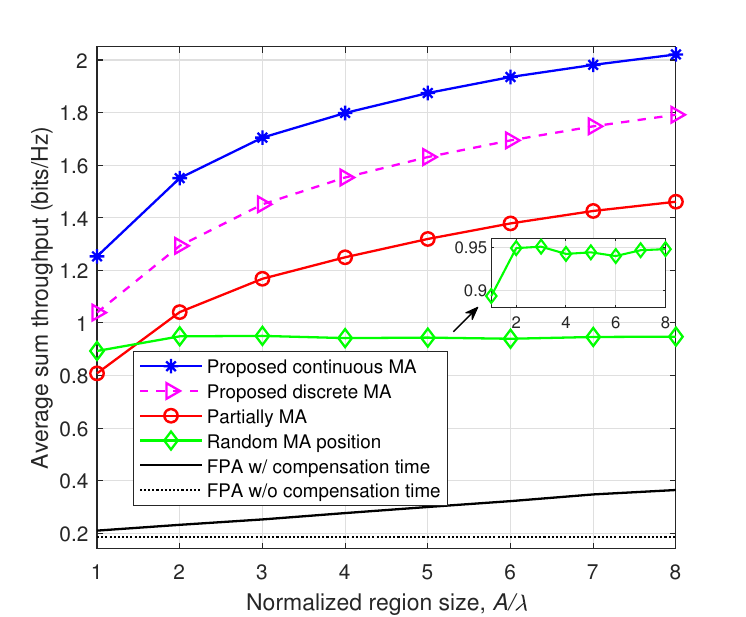}}
	\subfigure[$L = 4$.]{\label{fig:TP_vs_reg_L4}
		\includegraphics[scale=0.68]{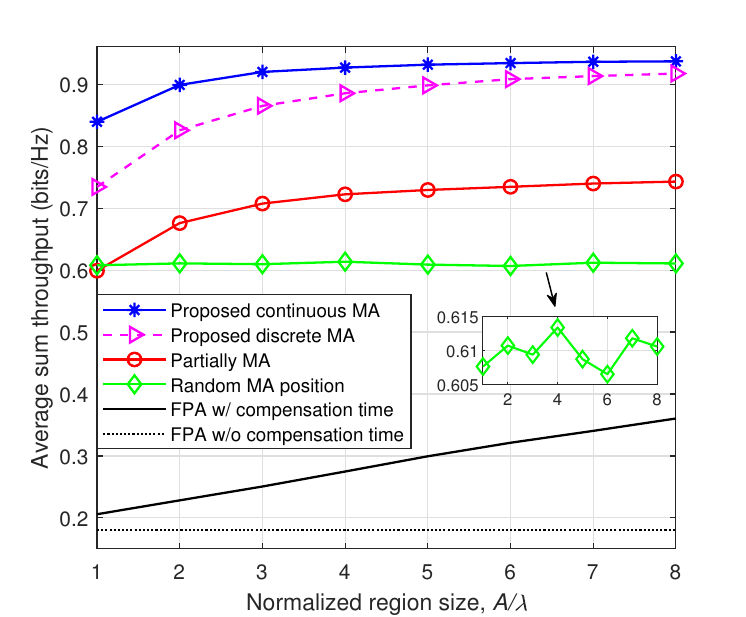}}
	\caption{Average sum throughput versus normalized region size.}
	\label{fig:TP_vs_reg}
	\vspace{-3mm}
\end{figure}

To gain further insights, we plot the corresponding optimized downlink WPT duration and total energy consumption at the HAP, given by $E_{\rm HAP} = P_{\rm A}\tau_1$, versus $P_{\rm A}$ in Figs. \ref{fig:Tau_vs_PA} and \ref{fig:Energy_vs_PA}, respectively. Fig. \ref{fig:Tau_vs_PA} implies that for any value of $P_{\rm A}$, the FPA with compensation time scheme experiences a significantly longer transmission time, resulting in a downlink WPT duration that is considerably greater than those of the other schemes. Even so, the sum throughput achieved by this scheme is far lower than that of the proposed continuous and discrete MA schemes (see Fig. \ref{fig:TP_vs_PA}). This demonstrates that improving the channel conditions is more effective than extending the transmission duration in enhancing the system sum throughput. Moreover, we observe from Fig. \ref{fig:Energy_vs_PA} that the continuous MA scheme consumes the least transmit energy at the HAP, followed by the discrete MA scheme, while the FPA with compensation time scheme consumes the most, more than twice that of the continuous MA scheme. This further highlights the disadvantages of this FPA scheme. 

In Fig. \ref{fig:TP_vs_T}, we examine the impact of transmission time on the performance of the proposed continuous MA scheme and the two FPA-based schemes, considering three different values of $v$: $0.0625$, $0.125$, and $0.25$ m/s \cite{2022_Xingxing_MAspeed}. Note that as $v$ increases, the compensation time $\tau_0(v)$ decreases, leading to reduced transmission duration and hence lower performance for the FPA with compensation time scheme. Moreover, The performance gap between the continuous MA scheme and the FPA-based schemes increases with the transmission time $T$, while the spacing among different FPA variants remains nearly unchanged. This is because both types of schemes achieve throughput that scales linearly with $T$, but with different growth rates. Specifically, define $A \triangleq \left(1 -\gamma_1^{\star}\right) \log_2\left(1 + \frac{c^{\star}\gamma_1^{\star}}{1-\gamma_1^{\star}}\right)$, $\gamma_1^\star \triangleq \frac{\tau_1^{\star}}{T} = \frac{\left(\exp\left(\mathcal W\left(\frac{c^{\star}-1}{e}\right) + 1\right)-1\right) }{c^{\star}+\exp\left(\mathcal W\left(\frac{c^{\star}-1}{e}\right) + 1\right)-1}$, and $c^{\star}\triangleq \frac{\sum_{k=1}^K{\zeta_kP_{\rm A}\left|h_{k,1}(\boldsymbol \omega_1^{\star}, \boldsymbol u_{k,1}^{\star})\right|^4}}{\sigma^2}$. The throughput of the continuous MA scheme is then given by $R_{\rm MA} = TA$. For an FPA-based scheme with compensation time $\tau_0(v)$, the  throughput is $R_{\rm FPA} = \left(T+\tau_0(v)\right)B = TB + \tau_0(v)B$, where $B \triangleq \left(1-\gamma_2^{\star} \right) \log_2\left(1 + \frac{c_0\gamma_2^{\star}}{1-\gamma_2^{\star}}\right)$, $\gamma_2^{\star} \triangleq \frac{\tau_1^{\star}}{T+\tau_0(v)} = \frac{\left(\exp\left(\mathcal W\left(\frac{c_0-1}{e}\right) + 1\right)-1\right) }{c+\exp\left(\mathcal W\left(\frac{c_0-1}{e}\right) + 1\right)-1}$, and $c_0\triangleq \frac{\sum_{k=1}^K{\zeta_kP_{\rm A}\left|h_{k,1}(\boldsymbol \omega_0, \boldsymbol u_{k,0})\right|^4}}{\sigma^2}$. Since $A > B$, the gap between the continuous MA and FPA-based schemes grows linearly with $T$, following $\Delta R(T) = R_{\rm MA} - R_{\rm FPA} = T(A-B)-\tau_0(v)B$. In contrast, the difference between any two FPA variants with compensation times $\tau_0(v_1)$ and $\tau_0(v_2)$ is $R_{\rm FPA}^{(v_1)} - R_{\rm FPA}^{(v_2)} = B\left[\tau_0(v_1)-\tau_0(v_2)\right]$, which is independent of $T$. Therefore, the gap between the continuous MA scheme and the FPA-based schemes widens as $T$ increases, while the spacing among FPA variants remains nearly constant. 

In Fig. \ref{fig:T_ratio_movT}, we plot the corresponding ratio of initial movement duration to transmission time, $\tau_0(v)/T$, under different antenna-moving speeds. As shown, this ratio decreases with increasing $T$, and higher speeds leads to smaller values. When $T$ is sufficiently large $T$ and the antenna-moving speed is high, the initial movement duration $\tau_0(v)$, which occurs during the preprocessing stage, incurs only a minor overhead relative to the subsequent transmission phase. Moreover, as illustrated in Figs. \ref{fig:TP_vs_T} and \ref{fig:T_ratio_movT}, even when both $T$ and $v$ are small (e.g., $T = 1$ s and $v = 0.0625$ m/s), leading to $\tau_0(v)$ being approximately 3.7 times longer than $T$, the FPA with compensation time scheme still achieves a lower throughput than the proposed continuous MA scheme. This result underscores that the performance advantage of MA repositioning remains significant even when the associated preprocessing overhead is relatively large.

Fig. \ref{fig:TP_vs_K} depicts the sum throughput versus the number of WDs. Both the proposed continuous and discrete MA schemes consistently outperform all baselines, with their performance advantage becoming more pronounced as $K$ increases. This trend arises because each WD is equipped with a dedicated MA, and a larger $K$ naturally leads to more MAs in the system. In the random MA position scheme, each MA is placed randomly without adapting to user-specific channel conditions. As $K$ grows, the cumulative impact of these non-optimized placements becomes more pronounced, resulting in greater performance degradation. In contrast, the continuous and discrete MA schemes optimize each MA's position based on the corresponding channel environment, thereby enabling more favorable propagation and improved spatial resource utilization. This adaptive positioning ensures strong scalability with $K$. Moreover, as $K$ increases, the probability of a larger $\tau_0(v)$ also increases, thereby enlarging the performance gap between the two FPA-based schemes due to the extended transmission time in the compensation scheme. 

In Fig. \ref{fig:TP_vs_reg}, we plot the system sum throughput versus the normalized region size $A/\lambda$ when $L = 10$ and $L = 4$, respectively. Firstly, it is observed that the sum throughput of the continuous MA, discrete MA, and partially MA schemes increase with $A/\lambda$, due to the greater flexibility in antenna movement within larger regions, allowing for more effective channel reconfiguration and improved transmission efficiency. However, while the throughput of these schemes tends to saturate as $A/\lambda$ increases for $L = 4$, noticeable growth is observed for $L = 10$. This is consistent with the theoretical and numerical results in \cite{2023_Lipeng_Modeling}, which suggest that to achieve maximum performance, a larger moving region is required when more channel paths are involved. 
Secondly, the performance gap between the continuous MA and discrete MA schemes narrows more noticeably with increasing $A/\lambda$ for $L = 4$. The reasons are twofold. For one thing, as $A/\lambda$ increases, the sum throughput of the continuous MA scheme saturates, and the periodic nature of the channel gains results in more optimal positions that achieve maximum performance. For another, as the number of candidate positions for the discrete MA scheme increases with $A/\lambda$, the likelihood of including some of these optimal positions grows, enhancing the chances of achieving performance comparable to the continuous MA scheme. 
Thirdly, the performance of the random MA position scheme exhibits relatively flat behavior across different $A/\lambda$. This is due to the lack of optimization in antenna placement, as random positioning cannot consistently exploit favorable channel conditions, resulting in suboptimal performance. Finally, in Fig. \ref{fig:TP_vs_reg_L4}, for the MA-based schemes, the throughput tends to saturate as the region size increases, since the superimposed channel gain exhibits a periodic variation with respect to antenna displacement, and the maximum performance can typically be achieved within a finite moving region \cite{2023_Lipeng_Modeling}. However, as the region size increases, the optimized positions derived from the continuous MA scheme are more likely to be located farther from the initial positions, resulting in a larger $\tau_0(v)$. Since the throughput of the FPA with compensation time scheme is positively correlated with the total transmission duration $T + \tau_0(v)$, its performance continues to improve even when the other schemes have already saturated.

\begin{figure}[!t]
	\centering
	\includegraphics[scale=0.68]{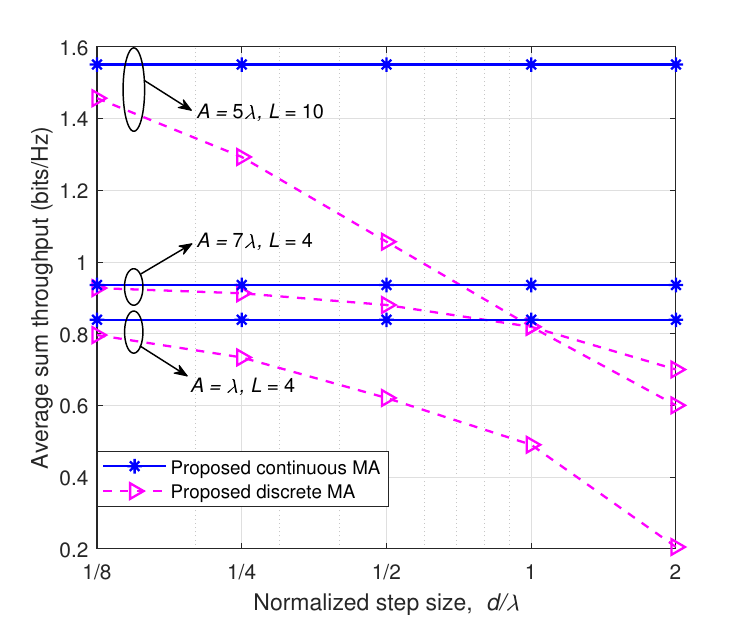}
	\caption{Average sum throughput versus normalized step size.}
	\label{fig:TP_vs_d}
	\vspace{-4mm}
\end{figure}

In Fig. \ref{fig:TP_vs_d}, the impact of the normalized step size, $d/\lambda$, on the performance of the proposed discrete MA scheme is investigated. It is observed that the sum throughput of the discrete MA scheme decreases as $d/\lambda$ increases. This is expected, as a larger $d/\lambda$ results in fewer candidate positions to select from, reducing flexibility in channel reconfiguration and consequently leading to lower performance. We also note that, for all three different values of $A$ and $L$, a normalized step size of $d/\lambda = 1/4$ is sufficient for the discrete MA scheme to achieve over 80\% of the performance of the continuous MA scheme. Moreover, for $A = 7\lambda$ and $L = 4$ (where the region size is large enough to include multiple optimal positions that achieve maximum performance, as discussed in the previous paragraph), the discrete MA scheme performs comparably to the continuous MA scheme, even with $d/\lambda = 1/2$.

\section{Conclusions}\label{Sec:conclu}
This paper studied an MA-aided WPCN utilizing NOMA for uplink WIT, offering a key distinction from traditional FPA-based WPCNs by enabling the MAs at both the HAP and WDs to adjust their positions before downlink WPT and uplink WIT. We considered two antenna movement patterns: continuous and discrete. To maximize the system sum throughput, we formulated two design problems corresponding to these movement patterns, where the MA positions, the time allocation, and the uplink power allocation were jointly optimized. To address these two non-convex optimization problems, we first revealed that the optimum for each is achieved using identical MA positions for both downlink WPT and uplink WIT. Building on this result, we developed computationally efficient algorithms using AO, where the optimization variables are split into three blocks for easier handling. Particularly, in the discrete positioning scenario, each subproblem was solved optimally with acceptable computational complexity. Numerical results demonstrated that the proposed designs can significantly boost system sum throughput compared to several baseline schemes. Moreover, key insights were gained regarding the performance comparison between the discrete and continuous MA schemes. Specifically, the discrete MA scheme was found to achieve a significant portion of the continuous MA scheme's throughput with a moderate step size, and when each antenna moving region was sufficiently large, it delivered comparable performance without requiring a tiny step size.  

This paper considered far-field channel conditions. However, the far-field assumption may no longer hold as the Rayleigh distance increases with the movement range or carrier frequency. This highlights the need to investigate near-field propagation scenarios, as well as hybrid environments where both near- and far-field effects coexist.  In addition, it is of interest to explore more general settings that account for imperfect CSI, multi-antenna WDs, and hardware impairments at both the HAP and the WDs. These extensions introduce new challenges in problem formulation and algorithm design, which are left for future work.

\appendix[Proof of Proposition \ref{prop1}]
Before proving Proposition \ref{prop1}, we first introduce the following lemma.
\begin{lem}\label{lem1}
	\rm For any $a_k\geq 0$ and $b_k \geq 0$, $k \in \mathcal K$, it holds that 
	\begin{align}\label{ineq:lem}
		\log_2\left( 1+\sum_{k=1}^Ka_kb_k\right) \leq & \log_2\left( \sqrt{1+\sum_{k=1}^Ka_k^2}\right) \nonumber\\
		& + \log_2\left(\sqrt{1+\sum_{k=1}^Kb_k^2}\right).
	\end{align}
\end{lem}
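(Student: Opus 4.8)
The plan is to prove Lemma~\ref{lem1} by reducing the multi-term inequality to the Cauchy--Schwarz inequality applied to suitably augmented vectors. First I would observe that the claimed inequality is equivalent, after exponentiating both sides (the logarithm is monotone), to
\begin{align}
1 + \sum_{k=1}^{K} a_k b_k \;\leq\; \sqrt{\Bigl(1 + \sum_{k=1}^{K} a_k^2\Bigr)\Bigl(1 + \sum_{k=1}^{K} b_k^2\Bigr)}.
\end{align}
This is precisely the Cauchy--Schwarz inequality $\langle \mathbf{x}, \mathbf{y}\rangle \leq \|\mathbf{x}\|_2 \|\mathbf{y}\|_2$ applied to the $(K+1)$-dimensional vectors $\mathbf{x} = [1, a_1, \dots, a_K]^T$ and $\mathbf{y} = [1, b_1, \dots, b_K]^T$, whose inner product is $1 + \sum_{k=1}^K a_k b_k$ (all entries being nonnegative, so the inner product equals its absolute value) and whose norms are the two square-root factors on the right-hand side.

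The key steps, in order, are: (i) note $\log_2$ is increasing, so it suffices to prove the exponentiated form; (ii) combine the two $\log_2$ terms on the right via $\log_2 u + \log_2 v = \log_2(uv)$ and pull the square roots together; (iii) invoke Cauchy--Schwarz on the augmented vectors $\mathbf{x}, \mathbf{y}$ defined above; (iv) conclude by taking $\log_2$ of both sides. I would also remark that the nonnegativity hypotheses $a_k, b_k \geq 0$ guarantee that both sides of the exponentiated inequality are well defined (the arguments of all logarithms are at least $1$) and that no sign issue arises when identifying $\sum_k a_k b_k$ with the inner product.

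There is essentially no main obstacle here — the statement is a routine repackaging of Cauchy--Schwarz — so the only thing to be careful about is bookkeeping: making sure the ``$1$'' on the left of \eqref{ineq:lem} is correctly accounted for as the product of the first coordinates of $\mathbf{x}$ and $\mathbf{y}$, and that the square roots in the statement are matched to the two vector norms rather than, say, being distributed incorrectly. If one preferred an entirely self-contained argument avoiding the vector language, an equivalent route is to expand the square of the right-hand side and reduce the desired inequality to $2\sum_k a_k b_k \leq \sum_k a_k^2 + \sum_k b_k^2 + \bigl(\sum_k a_k^2\bigr)\bigl(\sum_k b_k^2\bigr) - \bigl(\sum_k a_k b_k\bigr)^2$, where the term $2\sum_k a_k b_k \leq \sum_k a_k^2 + \sum_k b_k^2$ follows from $\sum_k (a_k - b_k)^2 \geq 0$ and the remaining difference $\bigl(\sum_k a_k^2\bigr)\bigl(\sum_k b_k^2\bigr) - \bigl(\sum_k a_k b_k\bigr)^2 \geq 0$ is again Cauchy--Schwarz (or Lagrange's identity); but the augmented-vector proof is cleaner and I would present that one.
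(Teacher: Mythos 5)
Your proposal is correct and is essentially the same argument as the paper's: the paper likewise augments the sequences with $a_0=b_0=1$, applies Cauchy--Schwarz to the resulting $(K+1)$-term sums, and then uses monotonicity of $\log_2$ together with the product rule of logarithms. No issues.
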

\begin{proof}
	First, we define $a_0 = b_0 = 1$. By the Cauchy–Schwarz inequality, we have $\left( \sum_{k=0}^Ka_kb_k\right)^2 \leq \left(\sum_{k=0}^Ka_k^2\right) \left(\sum_{k=0}^Kb_k^2\right)$, which implies $1+\sum_{k=1}^Ka_kb_k \leq \sqrt{\left(1+\sum_{k=1}^Ka_k^2\right)\left( 1+\sum_{k=1}^Kb_k^2\right)}$. Furthermore, since $\log_2(\cdot)$ is a monotonically increasing function, we have 
	\begin{align}\label{ineq:lem_fuzhu}
		& \log_2\left( 1+\sum_{k=1}^Ka_kb_k\right) \nonumber\\
		& \hspace{1cm}\leq \log_2\left( \sqrt{\left(1+\sum_{k=1}^Ka_k^2\right)\left( 1+\sum_{k=1}^Kb_k^2\right)}\right). 
	\end{align}
	Applying the product rule of logarithms to \eqref{ineq:lem_fuzhu}, we arrive at \eqref{ineq:lem}. This completes the proof of Lemma \ref{lem1}. 
\end{proof}

We are now ready to prove Proposition \ref{prop1}. First, at the optimal solution to (P1), the constraints in \eqref{P1_cons:d} must be active, i.e.,  $p_k^* = \frac{\zeta P_{\rm A}\left|h_{k,1}(\boldsymbol \omega_1^*, \boldsymbol u_{k,1}^*)\right|^2\tau_1^*}{\tau_3^*}$, $\forall k\in\mathcal K$. This can be proved by contradiction: if any constraint in \eqref{P1_cons:d} holds with strict inequality at the optimum, the objective value of (P1) can be further improved by increasing the corresponding $p_k$ until the inequality becomes an equality. Then, by substituting the expression of $p_k^*$ into the objective function of (P1), (P1) becomes  
\begin{eqnarray}\label{P1_eqv}
	\underset{\mathcal Z_1\setminus\{p_k\}}{\max} \hspace{2mm} \hat R_{\rm sum}^{\rm cont} \hspace{8mm}
	\text{s.t.} \hspace{2mm} \eqref{P1_cons:b}, \eqref{P1_cons:c}, \eqref{P1_cons:e}, \eqref{P1_cons:f}, 
\end{eqnarray}
where 
\begin{align}
	 & \hat R_{\rm sum}^{\rm cont} = \tau_3 \nonumber\\ 
	 & \!\times\! \log_2\left(1 \!+\! \sum_{k=1}^K\frac{\zeta P_{\rm A}\tau_1\left|h_{k,1}(\boldsymbol \omega_1, \boldsymbol u_{k,1})\right|^2\left|h_{k,2}(\boldsymbol \omega_2, \boldsymbol u_{k,2})\right|^2}{\sigma^2\tau_3}\right).
\end{align}
Next, we define $z_k\left(\boldsymbol \omega,\boldsymbol u_k\right) \triangleq \frac{\sqrt{\zeta P_{\rm A}\tau_1}\left|\bm f_k(\boldsymbol u_k)^H\mathbf \Sigma_k\bm g_k(\boldsymbol \omega)\right|^2}{\sqrt{\sigma^2\tau_3}}$ and $z_k^* \triangleq \underset{\boldsymbol \omega \in\mathcal C_\omega,\boldsymbol u_k \in\mathcal C_{u,k}}{\max}z_k\left(\boldsymbol \omega,\boldsymbol u_k\right)$. Then, $\hat R_{\rm sum}^{\rm cont}$ can be rewritten as $\tau_3\log_2\left(1 + \sum_{k=1}^Kz_k\left(\boldsymbol \omega_1,\boldsymbol u_{k,1}\right)z_k\left(\boldsymbol \omega_2,\boldsymbol u_{k,2}\right)\right)$, which satisfies the following inequality: 
\begin{align}
	&\tau_3\log_2\left(1 + \sum_{k=1}^Kz_k\left(\boldsymbol \omega_1,\boldsymbol u_{k,1}\right)z_k\left(\boldsymbol \omega_2,\boldsymbol u_{k,2}\right)\right) \nonumber\\
	& \overset{(a)}{\leq} \tau_3\log_2\left(\sqrt{1+\sum_{k=1}^Kz_k^2\left(\boldsymbol \omega_1,\boldsymbol u_{k,1}\right)}\right) \nonumber\\
	& \hspace{6mm} + \tau_3\log_2\left(\sqrt{1+\sum_{k=1}^Kz_k^2\left(\boldsymbol \omega_2,\boldsymbol u_{k,2}\right)}\right) \nonumber\\
	& \overset{(b)}{\leq} \tau_3\log_2\left(\sqrt{1+\sum_{k=1}^K\left( z_k^*\right) ^2}\right),
\end{align}
where $(a)$ follows from Lemma \ref{lem1}, $(b)$ holds because $z_k^* \triangleq \underset{\boldsymbol \omega \in\mathcal C_\omega,\boldsymbol u_k \in\mathcal C_{u,k}}{\max}z_k\left(\boldsymbol \omega,\boldsymbol u_k\right)$, and the equality in $(b)$ holds when $z_k\left(\boldsymbol \omega_1^*,\boldsymbol u_{k,1}^*\right) = z_k\left(\boldsymbol \omega_2^*,\boldsymbol u_{k,2}^*\right) = z_k^*$. Since the function $z_k\left(\boldsymbol \omega,\boldsymbol u_k\right)$  exhibits periodic behavior due to the presence of the cosine function in its expansion, there may exist distinct solutions $\boldsymbol \omega_1^* \neq \boldsymbol \omega_2^*$ and $\boldsymbol u_{k,1}^* \neq \boldsymbol u_{k,2}^*$ for some $k\in\mathcal K$. However, if $\boldsymbol \omega_1^* = \boldsymbol \omega_2^*$ and $\boldsymbol u_{k,1}^* = \boldsymbol u_{k,2}^*$, $\forall k\in\mathcal K$, we have $\tau_2^* = \max\left(\frac{\left\|\boldsymbol \omega_2^* - \boldsymbol \omega_1^*\right\|_1}{v}, \max_{k\in\mathcal K}\frac{\left\| \boldsymbol u_{k,2}^* - \boldsymbol u_{k,1}^*\right\|_1}{v}\right) = 0$. Otherwise, $\tau_2^* > 0$. Clearly, $\tau_2^* > 0$ cannot be the optimal solution to (P1), as the positive duration $\tau_2$ can be reallocated to the downlink WPT phase, resulting in higher $\{p_k^*\}$ and, consequently, an increase in the system sum throughput. Therefore, we have $\boldsymbol \omega_1^* = \boldsymbol \omega_2^*$, $\boldsymbol u_{k,1}^* = \boldsymbol u_{k,2}^*$, $\forall k\in\mathcal K$, and $\tau_2^* = 0$. Then, the objective function becomes $\tau_3^*\log_2\left(1 + \sum_{k=1}^K\frac{\zeta P_{\rm A}\tau_1^*\left|h_{k,1}(\boldsymbol \omega_1^*, \boldsymbol u_{k,1}^*)\right|^4}{\sigma^2\tau_3^*}\right)$, and constraint \eqref{P1_cons:b} becomes $\tau_1^* + \tau_3^* \leq T$. It is easy to see that at the optimal solution of (P1), there must be $\tau_1^* + \tau_3^* = T$ since otherwise the rest of $T$ can be allocated to $\tau_1$ to further improve the objective value. Combining the above results completes the proof.

\bibliographystyle{IEEEtran}
\bibliography{ref}

\begin{thebibliography}{10}
\providecommand{\url}[1]{#1}
\csname url@samestyle\endcsname
\providecommand{\newblock}{\relax}
\providecommand{\bibinfo}[2]{#2}
\providecommand{\BIBentrySTDinterwordspacing}{\spaceskip=0pt\relax}
\providecommand{\BIBentryALTinterwordstretchfactor}{4}
\providecommand{\BIBentryALTinterwordspacing}{\spaceskip=\fontdimen2\font plus
\BIBentryALTinterwordstretchfactor\fontdimen3\font minus
  \fontdimen4\font\relax}
\providecommand{\BIBforeignlanguage}[2]{{%
\expandafter\ifx\csname l@#1\endcsname\relax
\typeout{** WARNING: IEEEtran.bst: No hyphenation pattern has been}%
\typeout{** loaded for the language `#1'. Using the pattern for}%
\typeout{** the default language instead.}%
\else
\language=\csname l@#1\endcsname
\fi
#2}}
\providecommand{\BIBdecl}{\relax}
\BIBdecl

\bibitem{2011_Sujesha_EH_survey}
S.~Sudevalayam and P.~Kulkarni, ``Energy harvesting sensor nodes: Survey and
  implications,'' \emph{IEEE Commun. Surveys Tuts.}, vol.~13, no.~3, pp.
  443--461, Third Quarter 2010.

\bibitem{2014_Hyungsik_WPCN}
H.~Ju and R.~Zhang, ``Throughput maximization in wireless powered communication
  networks,'' \emph{IEEE Trans. Wireless Commun.}, vol.~13, no.~1, pp.
  418--428, Jan. 2014.

\bibitem{2014_Hyungsik_full-duplex}
------, ``Optimal resource allocation in full-duplex wireless-powered
  communication network,'' \emph{IEEE Trans. on Commun.}, vol.~62, no.~10, pp.
  3528--3540, Oct. 2014.

\bibitem{2014_Liang_WPCN}
L.~Liu, R.~Zhang, and K.-C. Chua, ``Multi-antenna wireless powered
  communication with energy beamforming,'' \emph{IEEE Trans. Commun.}, vol.~62,
  no.~12, pp. 4349--4361, Dec. 2014.

\bibitem{2018_Hanjin_WPCN}
H.~Kim, H.~Lee, L.~Duan, and I.~Lee, ``Sum-rate maximization methods for
  wirelessly powered communication networks in interference channels,''
  \emph{IEEE Trans. Wireless Commun.}, vol.~17, no.~10, pp. 6464--6474, Oct.
  2018.

\bibitem{2016_Feng_WPCN}
F.~Zhao, L.~Wei, and H.~Chen, ``Optimal time allocation for wireless
  information and power transfer in wireless powered communication systems,''
  \emph{IEEE Trans. Veh. Technol.}, vol.~65, no.~3, pp. 1830--1835, Mar. 2016.

\bibitem{2016_Xun_WPCN}
X.~Zhou, C.~K. Ho, and R.~Zhang, ``Wireless power meets energy harvesting: A
  joint energy allocation approach in {OFDM}-based system,'' \emph{IEEE Trans.
  Wireless Commun.}, vol.~15, no.~5, pp. 3481--3491, May 2016.

\bibitem{2014_Qian_WPCN}
Q.~Sun, G.~Zhu, C.~Shen, X.~Li, and Z.~Zhong, ``Joint beamforming design and
  time allocation for wireless powered communication networks,'' \emph{IEEE
  Commun. Lett.}, vol.~18, no.~10, pp. 1783--1786, Oct. 2014.

\bibitem{2015_Gang_massive}
G.~Yang, C.~K. Ho, R.~Zhang, and Y.~L. Guan, ``Throughput optimization for
  massive {MIMO} systems powered by wireless energy transfer,'' \emph{IEEE J.
  Sel. Areas Commun.}, vol.~33, no.~8, pp. 1640--1650, Aug. 2015.

\bibitem{2004_Molisch_AS}
A.~F. Molisch and M.~Z. Win, ``{MIMO} systems with antenna selection,''
  \emph{IEEE Microwave Mag.,}, vol.~5, no.~1, pp. 46--56, Mar. 2004.

\bibitem{2004_Sanayei_AS}
S.~Sanayei and A.~Nosratinia, ``Antenna selection in {MIMO} systems,''
  \emph{IEEE Commun. Mag.}, vol.~42, no.~10, pp. 68--73, Oct. 2004.

\bibitem{2019_Qingqing_Joint}
Q.~Wu and R.~Zhang, ``Intelligent reflecting surface enhanced wireless network
  via joint active and passive beamforming,'' \emph{IEEE Trans. Wireless
  Commun.}, vol.~18, no.~11, pp. 5394--5409, Nov. 2019.

\bibitem{2022_Qingqing_WEIT_overview}
Q.~Wu, X.~Guan, and R.~Zhang, ``Intelligent reflecting surface-aided wireless
  energy and information transmission: An overview,'' \emph{Proc. IEEE}, vol.
  110, no.~1, pp. 150--170, Jan. 2022.

\bibitem{2024_Ying_WPCN_IRS}
Y.~Gao, Q.~Wu, and W.~Chen, ``{IRS}-aided multi-antenna wireless powered
  communications in interference channels,'' \emph{IEEE Trans. Veh. Technol.},
  2024, early access, doi: 10.1109/TVT.2024.3431676.

\bibitem{2025_Qingqing_Deployment}
Q.~Wu, G.~Chen, Q.~Peng, W.~Chen, Y.~Yuan, Z.~Cheng, J.~Dou, Z.~Zhao, and
  P.~Li, ``Intelligent reflecting surfaces for wireless networks: Deployment
  architectures, key solutions, and field trials,'' \emph{IEEE Wireless
  Commun.}, 2025, early access, doi: 10.1109/MWC.001.250002.

\bibitem{2023_Lipeng_Modeling}
L.~Zhu, W.~Ma, and R.~Zhang, ``Modeling and performance analysis for movable
  antenna enabled wireless communications,'' \emph{IEEE Trans. Wireless
  Commun.}, vol.~23, no.~6, pp. 6234--6250, Jun. 2024.

\bibitem{2023_Wenyan_MIMO}
W.~Ma, L.~Zhu, and R.~Zhang, ``{MIMO} capacity characterization for movable
  antenna systems,'' \emph{IEEE Trans. Wireless Commun.}, vol.~23, no.~4, pp.
  3392--3407, Apr. 2024.

\bibitem{2023_Lipeng_overview}
L.~Zhu, W.~Ma, and R.~Zhang, ``Movable antennas for wireless communication:
  Opportunities and challenges,'' \emph{IEEE Commun. Mag.}, vol.~62, no.~6, pp.
  114--120, Jun. 2024.

\bibitem{2021_Wong_fluid}
K.-K. Wong, A.~Shojaeifard, K.-F. Tong, and Y.~Zhang, ``Fluid antenna
  systems,'' \emph{IEEE Trans. Wireless Commun.}, vol.~20, no.~3, pp.
  1950--1962, Mar. 2021.

\bibitem{2020_Wong_Fluid}
K.~K. Wong, A.~Shojaeifard, K.-F. Tong, and Y.~Zhang, ``Performance limits of
  fluid antenna systems,'' \emph{IEEE Commun. Lett.}, vol.~24, no.~11, pp.
  2469--2472, Nov. 2020.

\bibitem{2023_Lipeng_uplink}
L.~Zhu, W.~Ma, B.~Ning, and R.~Zhang, ``Movable-antenna enhanced multiuser
  communication via antenna position optimization,'' \emph{IEEE Trans. Wireless
  Commun.}, vol.~23, no.~7, pp. 7214--7229, Jul. 2024.

\bibitem{2024_Nian_MA}
N.~Li, P.~Wu, B.~Ning, and L.~Zhu, ``Sum rate maximization for movable antenna
  enabled uplink {NOMA},'' \emph{IEEE Wireless Commun. Lett.}, vol.~13, no.~8,
  pp. 2140--2144, Aug. 2024.

\bibitem{2024_Biqian_MA}
B.~Feng, Y.~Wu, X.-G. Xia, and C.~Xiao, ``Weighted sum-rate maximization for
  movable antenna-enhanced wireless networks,'' \emph{IEEE Wireless Commun.
  Lett.}, vol.~13, no.~6, pp. 1770--1774, Jun. 2024.

\bibitem{2024_Yichi_hybrid}
\BIBentryALTinterwordspacing
Y.~Zhang \emph{et~al.}, ``Movable antenna-aided hybrid beamforming for
  multi-user communications,'' \emph{{\rm 2024}, arXiv: 2404.00953}. [Online].
  Available: \url{https://arxiv.org/abs/2404.00953}
\BIBentrySTDinterwordspacing

\bibitem{2023_Zhenyu_uplink}
\BIBentryALTinterwordspacing
Z.~Xiao, X.~Pi, L.~Zhu \emph{et~al.}, ``Multiuser communications with
  movable-antenna base station: Joint antenna positioning, receive combining,
  and power control,'' \emph{{\rm 2023}, arXiv: 2308.09512}. [Online].
  Available: \url{https://arxiv.org/abs/2308.09512}
\BIBentrySTDinterwordspacing

\bibitem{2024_Ying_MA}
Y.~Gao, Q.~Wu, and W.~Chen, ``Joint transmitter and receiver design for movable
  antenna enhanced multicast communications,'' \emph{IEEE Trans. Wireless
  Commun.}, 2024, early access, doi: 10.1109/TWC.2024.3463390.

\bibitem{2023_Lipeng_null}
L.~Zhu, W.~Ma, and R.~Zhang, ``Movable-antenna array enhanced beamforming:
  Achieving full array gain with null steering,'' \emph{IEEE Commun. Lett.},
  vol.~27, no.~12, pp. 3340--3344, Dec. 2023.

\bibitem{2024_Haoran_MA}
H.~Qin, W.~Chen, Z.~Li, Q.~Wu, N.~Cheng, and F.~Chen, ``Antenna positioning and
  beamforming design for fluid antenna-assisted multi-user downlink
  communications,'' \emph{IEEE Wireless Commun. Lett.}, vol.~13, no.~4, pp.
  1073--1077, Apr. 2024.

\bibitem{2024_Honghao_IFC}
H.~Wang, Q.~Wu, and W.~Chen, ``Movable antenna enabled interference network:
  Joint antenna position and beamforming design,'' \emph{IEEE Wireless Commun.
  Lett.}, vol.~13, no.~9, pp. 2517--2521, Sep. 2024.

\bibitem{2023_Guojie_gradient}
\BIBentryALTinterwordspacing
G.~Hu, Q.~Wu, K.~Xu, J.~Ouyang, J.~Si, Y.~Cai, and N.~Al-Dhahir,
  ``Movable-antenna array enabled multiuser uplink: A low-complexity gradient
  descent for total transmit power minimization,'' \emph{{\rm 2023}, arXiv:
  2312.05763}. [Online]. Available: \url{https://arxiv.org/abs/2312.05763}
\BIBentrySTDinterwordspacing

\bibitem{2024_Zhenqiao_secure}
Z.~Cheng, N.~Li, J.~Zhu, X.~She, C.~Ouyang, and P.~Chen, ``Enabling secure
  wireless communications via movable antennas,'' in \emph{Proc. IEEE ICASSP},
  Seoul, Korea, Republic of, Mar. 2024, pp. 9186--9190.

\bibitem{2024_Guojie_secure_lett}
G.~Hu, Q.~Wu, K.~Xu, J.~Si, and N.~Al-Dhahir, ``Secure wireless communication
  via movable-antenna array,'' \emph{IEEE Signal Process. Lett.}, vol.~31, pp.
  516--520, Jan. 2024.

\bibitem{2024_Guojie_secure_long}
G.~Hu, Q.~Wu, D.~Xu, K.~Xu, J.~Si, Y.~Cai, and N.~Al-Dhahir, ``Movable
  antennas-assisted secure transmission without eavesdroppers' instantaneous
  {CSI},'' \emph{IEEE Trans. Mob. Comput.}, 2024, early access, doi:
  10.1109/TMC.2024.3438795.

\bibitem{2024_Weidong_MA_spectrumsharing}
X.~Wei, W.~Mei, D.~Wang, B.~Ning, and Z.~Chen, ``Joint beamforming and antenna
  position optimization for movable antenna-assisted spectrum sharing,''
  \emph{IEEE Wireless Commun. Lett.}, vol.~13, no.~9, pp. 2502--2506, Sep.
  2024.

\bibitem{2025_Weidong_MA_widebeam}
D.~Wang, W.~Mei, B.~Ning, Z.~Chen, and R.~Zhang, ``Movable antenna enhanced
  wide-beam coverage: Joint antenna position and beamforming optimization,''
  \emph{Authorea Preprints}.

\bibitem{2023_Yifei_discrete}
Y.~Wu, D.~Xu, D.~W.~K. Ng, W.~Gerstacker, and R.~Schober, ``Movable
  antenna-enhanced multiuser communication: Jointly optimal discrete antenna
  positioning and beamforming,'' in \emph{Proc. IEEE GLOBECOM}, Kuala Lumpur,
  Malaysia, Dec. 2023, pp. 7508--7513.

\bibitem{2024_Weidong_graph}
W.~Mei, X.~Wei, B.~Ning, Z.~Chen, and R.~Zhang, ``Movable-antenna position
  optimization: {A} graph-based approach,'' \emph{IEEE Wireless Commun. Lett.},
  vol.~13, no.~7, pp. 1853--1857, Jul. 2024.

\bibitem{2024_Xiazhi_FAS_WPCN}
X.~Lai, K.~Zhi, W.~Li, T.~Wu, C.~Pan, and M.~Elkashlan, ``{FAS}-assisted
  wireless powered communication systems,'' in \emph{Proc. IEEE Int. Conf.
  Commun. Workshops (ICC Workshops)}, Denver, CO, USA, Jun. 2024, pp.
  1731--1736.

\bibitem{2024_Ghadi_FAS_WPCN}
\BIBentryALTinterwordspacing
F.~R. Ghadi, M.~Kaveh, K.-K. Wong, R.~Jantti, and Z.~Yan, ``On performance of
  {FAS}-aided wireless powered {NOMA} communication systems,'' \emph{{\rm
  2024}, arXiv: 2405.11520}. [Online]. Available:
  \url{https://arxiv.org/abs/2405.11520}
\BIBentrySTDinterwordspacing

\bibitem{2023_Wenyan_estimation}
W.~Ma, L.~Zhu, and R.~Zhang, ``Compressed sensing based channel estimation for
  movable antenna communications,'' \emph{IEEE Commun. Lett.}, vol.~27, no.~10,
  pp. 2747--2751, Oct. 2023.

\bibitem{2024_Ruoyu_channel}
R.~Zhang, L.~Cheng, W.~Zhang, X.~Guan, Y.~Cai, W.~Wu, and R.~Zhang, ``Channel
  estimation for movable-antenna {MIMO} systems via tensor decomposition,''
  \emph{IEEE Wireless Commun. Lett.}, vol.~13, no.~11, pp. 3089--3093, Nov.
  2024.

\bibitem{2023_Christodoulos_channel}
C.~Skouroumounis and I.~Krikidis, ``Fluid antenna with linear {MMSE} channel
  estimation for large-scale cellular networks,'' \emph{IEEE Trans. Commun.},
  vol.~71, no.~2, pp. 1112--1125, Feb. 2023.

\bibitem{2024_Zijian_channel}
Z.~Zhang, J.~Zhu, L.~Dai, and R.~W. Heath, ``Successive bayesian reconstructor
  for {FAS} channel estimation,'' in \emph{Proc. IEEE WCNC}, 2024, pp. 1--5.

\bibitem{2024_Shuyan_channel}
S.~Ji, C.~Psomas, and J.~Thompson, ``Correlation-based machine learning
  techniques for channel estimation with fluid antennas,'' in \emph{Proc. IEEE
  ICASSP}, 2024, pp. 8891--8895.

\bibitem{2023_Xiao_FAS}
\BIBentryALTinterwordspacing
X.~Lin, H.~Yang, Y.~Zhao, J.~Hu, and K.-K. Wong, ``Performance analysis of
  integrated data and energy transfer assisted by fluid antenna systems,''
  \emph{{\rm 2023}, arXiv: 2311.07134}. [Online]. Available:
  \url{https://arxiv.org/abs/2311.07134}
\BIBentrySTDinterwordspacing

\bibitem{2024_Pengcheng_MA}
\BIBentryALTinterwordspacing
P.~Chen, Y.~Yang, B.~Lyu, Z.~Yang, and A.~Jamalipour, ``Movable
  antenna-enhanced wireless powered mobile edge computing systems,'' \emph{{\rm
  2024}, arXiv: 2404.18406}. [Online]. Available:
  \url{https://arxiv.org/abs/2404.18406}
\BIBentrySTDinterwordspacing

\bibitem{2014_Zhiguo_NOMA}
Z.~Ding, Z.~Yang, P.~Fan \emph{et~al.}, ``On the performance of non-orthogonal
  multiple access in {5G} systems with randomly deployed users,'' \emph{IEEE
  Signal Process. Lett.}, vol.~21, no.~12, pp. 1501--1505, Jul. 2014.

\bibitem{2014_Mohammed_NOMA}
M.~Al-Imari, P.~Xiao, M.~A. Imran, and R.~Tafazolli, ``Uplink non-orthogonal
  multiple access for {5G} wireless networks,'' in \emph{Proc. 11th Int. Symp.
  Wireless Commun. Syst. (ISWCS)}, Barcelona, Spain, Aug. 2014, pp. 781--785.

\bibitem{2017_Sun_MM}
Y.~Sun, P.~Babu, and D.~P. Palomar, ``Majorization-minimization algorithms in
  signal processing, communications, and machine learning,'' \emph{IEEE Trans.
  Signal Process.}, vol.~65, no.~3, pp. 794--816, Feb. 2017.

\bibitem{2004_S.Boyd_cvx}
S.~Boyd and L.~Vandenberghe, \emph{Convex Optimization}.\hskip 1em plus 0.5em
  minus 0.4em\relax Cambridge, U.K.: Cambridge Univ. Press, 2004.

\bibitem{2022_Zheng_timesolution}
Z.~Chu, Z.~Zhu, X.~Li, F.~Zhou, L.~Zhen, and N.~Al-Dhahir, ``Resource
  allocation for {IRS}-assisted wireless-powered {FDMA} {IoT} networks,''
  \emph{IEEE Internet Things J.}, vol.~9, no.~11, pp. 8774--8785, Jun. 2022.

\bibitem{2022_Xingxing_MAspeed}
X.~Li, Y.~Zhou, Z.~Shen, B.~Song, and S.~Li, ``Using a moving antenna to
  improve {GNSS/INS} integration performance under low-dynamic scenarios,''
  \emph{IEEE Trans. Intell. Transp. Syst.}, vol.~23, no.~10, pp.
  17\,717--17\,728, Oct. 2022.

\bibitem{2024_Jinze_full-duplex}
J.~Ding, Z.~Zhou, C.~Wang, W.~Li, L.~Lin, and B.~Jiao, ``Secure full-duplex
  communication via movable antennas,'' in \emph{Proc. IEEE GLOBECOM}, 2024,
  pp. 885--890.

\end{thebibliography}

\end{document}